\begin{document}

\title{Delay-Sensitive Distributed Power and Transmission Threshold Control for
S-ALOHA Network with Finite State Markov Fading Channels}

\newtheorem{Thm}{Theorem}
\newtheorem{Lem}{Lemma}
\newtheorem{Cor}{Corollary}
\newtheorem{Def}{Definition}
\newtheorem{Exam}{Example}
\newtheorem{Alg}{Algorithm}
\newtheorem{Prob}{Problem}
\newtheorem{Rem}{Remark}
\newtheorem{Proof}{Proof}
\newtheorem{Subproblem}{Subproblem}

\author{\authorblockN{Huang Huang {\em Student Member, IEEE} and Vincent K. N. Lau, {\em Senior Member, IEEE}}
\thanks{Manuscript received Dec 19, 2008; revised Jun 12, 2009 and Aug 15,
2009; accepted Aug 15, 2009. The editor coordinating the review of
this paper and approving it for publication was Sonia Aissa. This
work is supported by RGC 615407. The material in this paper was
presented in part at the IEEE International Symposium on Information
Theory, Seoul, Korea, June/July 2009.}
\thanks{ The authors are with
the Department of Electronic and Computer Engineering (ECE), The
Hong Kong University of Science and Technology (HKUST), Hong Kong.
(email: \{huang, eeknlau\}@ust.hk).}}

\markboth{To be appeared in IEEE Trans. Wireless Commun.}%
{Shell \MakeLowercase{\textit{et al.}}: Bare Demo of IEEEtran.cls
for Journals}
%

\maketitle

\begin{abstract}
In this paper, we consider the delay-sensitive power and
transmission threshold control design in S-ALOHA network with FSMC
fading channels. The random access system consists of an access
point with $K$ competing users, each has access to the local channel
state information (CSI) and queue state information (QSI) as well as
the common feedback (ACK/NAK/Collision) from the access point. We
seek to derive the delay-optimal control policy (composed of
threshold and power control). The optimization problem belongs to
the memoryless policy $K$-agent infinite horizon decentralized
Markov decision process (DEC-MDP), and finding the optimal policy is
shown to be computationally intractable. To obtain a feasible and
low complexity solution, we recast the optimization problem into two
subproblems, namely the {\em power control} and the {\em threshold
control} problem. For a given threshold control policy, the power
control problem is decomposed into a {\em reduced state MDP} for
single user so that the overall complexity is $\mathcal{O}(NJ)$,
where $N$ and $J$ are the buffer size and the cardinality of the CSI
states. For the threshold control problem, we exploit some special
structure of the collision channel and common feedback information
to derive a low complexity solution. The delay performance of the
proposed design is shown to have substantial gain relative to
conventional throughput optimal approaches for S-ALOHA.
\end{abstract}

\begin{keywords}
S-ALOHA, delay, Markov decision process (MDP), local channel state
information (CSI), local queue state information (QSI), threshold
control, power control.
\end{keywords}

\section{Introduction}\label{sec:intro}
Random access network is a hot research topic due to its robustness
in system performance. In particular, ALOHA is a popular example of
random access protocol which has attracted a lot of research
attention over the past two decades. One important application is
the access network (such as the infrastructure mode in WiFi) where
multiple nodes compete for transmission opportunity to transmit data
to an access point (AP). In \cite{Tsybakov:1979}, the authors
considered the design and analysis of the traditional buffered
slotted ALOHA (S-ALOHA) in which finite users with infinite buffer
attempt to transmit a backlogged packet according to a {\em
transmission probability} in one slot, and the packet is
successfully received if and only if exact one packet is
transmitted. In asymmetric network (heterogenous users), the
stability region has only been obtained in two and three user
cases\cite{Stability:1999}. The study of the stability region for
general number of users is difficult because the transition
probability of the state space of the interacting queues alters from
the non-empty to empty buffer case. In \cite{Dominant:System}, the
authors proposed a {\em dominant system} technique to obtain a lower
bound for the stability region for the general case. In symmetric
ALOHA network (homogeneous users), all users are statistically
identical and hence, the stability region is degenerated to one
dimension. It is shown in \cite{Tsybakov:1979,CSI_RA:2005} that the
system is stable as long as the arrival rate is less than the
average throughput. As a result, stability analysis is equivalent to
the throughput analysis. The authors in \cite{CSI_RA:2005} extended
the protocol to an adaptive ALOHA over the multi-packet reception
(MPR) channel to maximize the system throughput. For instance, the
transmission probability is a function of the local channel state
information (CSI). In \cite{Thrput:pwr}, the authors extended to the
adaptive transmission rate and power control w.r.t to CSI to
maximize the throughput. In \cite{Yu:2006}, it is shown that a
simple adaptive permission probability scheme, namely binary
scheduling, is throughput optimal for homogeneous users with
adaptive transmission rate in collision channel. In the binary
scheduling scheme, there is a transmission threshold in which user
could attempt to transmit its backlogged packet only when its local
CSI exceeds the threshold.

In all the above works on stability and throughput analysis and
optimization, the delay performance has been ignored completely. In
practice, applications are delay-sensitive and it is critical to
optimize the delay performance in S-ALOHA network to support
realtime applications. In \cite{Delay:2005}, the authors surveyed
the recent works on delay analysis of traditional S-ALOHA network in
which exact delay can be obtained only in two user case. In
\cite{TUA:2007}, the delay performance for finite user finite buffer
is analyzed using the tagged user analysis (TUA) method. Although
the channel fading is considered, adaptive transmission probability
and rate with power control is not allowed. In
\cite{Delay:trade-off}, the trade-off between delay and energy in
additive write Gaussian noise (AWGN) channel with no queue state
information (QSI) is investigated. However, they assumed
multi-access coding to ensure successful reception for each user
even if all competing users transmit simultaneously. In
\cite{Yeh:01PhD}, the authors proved that the longest queue highest
possible rate (LQHPR) policy, which is a centralized control policy
requiring perfect knowledge of global QSI and global CSI, is
delay-optimal in symmetric network. While the above works deal with
the delay performance of S-ALOHA network, there are still a lot of
technical challenges to be solved. They are listed below.

\begin{itemize}
\item{\bf Queue-aware power and threshold control for S-ALOHA:}
Previous literature focused either on the power control (under a
fixed and common threshold for all users) for throughput
optimization, or on the delay analysis of uncontrolled S-ALOHA
network. Both the transmission threshold control and power control
policies are important means to optimize the delay performance of
S-ALOHA. However, due to the lack of global knowledge on CSI and
QSI, it is quite challenging to design delay-sensitive control
schemes for S-ALOHA networks.
\item{\bf Exploiting memory in the fading channels:}
Existing works have assumed memoryless adaptation in which the
control actions are done independently slot by slot (assuming fading
is i.i.d). While i.i.d fading could lead to simple solution, it
fails to exploit the memory of the time varying fading channels,
which is critical to boost the delay performance of S-ALOHA network.

\item{\bf Utilization of local QSI and common feedback information from the AP:}
Existing control policy on throughput optimization only adapts to
the local CSI and did not exploit the local QSI as well as common
feedback information from the AP. These side information are also
critical to improve the delay performance of the S-ALOHA network.
\end{itemize}

In this paper, we shall propose a delay-sensitive power and
transmission threshold control algorithm for S-ALOHA network which
addresses the above three important issues. We consider a S-ALOHA
network with $K$ users. The transmit power and threshold control
policies adapt to the local CSI, local QSI as well as common
feedback information (ACK/NAK/Collision) from the AP. The
delay-optimization problem belongs to the memoryless policy
$K$-agent infinite horizon decentralized Markov decision process
(DEC-MDP)\cite{DEC-MDP:2002}. The problem of finding the optimal
policy is proved to be {\em
NP-hard}\cite{Littman:1994,Meuleau:1999}, which means that the
optimal solution is computationally intractable. To obtain a
feasible and low complexity solution, we recast the optimization
problem into two subproblems, namely the {\em power control} and the
{\em threshold control} problem. For a given threshold control
policy, the power control problem is decomposed into a {\em reduced
state MDP} for single user so that the overall complexity is
$\mathcal{O}(NJ^2)$, where $N$ and $J$ are the buffer size and the
cardinality of the CSI states. On the other hand, we solve the
threshold control problem by exploiting the special structure of the
S-ALOHA network and common feedback information to derive a low
complexity solution. The delay performance of the proposed design is
shown to have substantial gain relative to conventional solutions.

This paper is organized as follows. In section \ref{sec:model}, we
outline the system model of S-ALOHA network and define the
delay-optimal control policy. In section \ref{sec:problem}, we shall
formulate the delay-optimal problem and introduce the DEC-MDP model.
In section \ref{sec:symm}, we exploit the special structure in
symmetric network. We also extend to asymmetric case in section
\ref{sec:asymmetric} and illustrate the performance via simulations
in section \ref{sec:simulation}. A brief summary is given in section
\ref{sec:summary} finally.

\section{System Model}\label{sec:model}
In this section, we shall elaborate the system model, including
source and physical layer model, as well as the control policy in
symmetric network, and extend to the asymmetric case in section
\ref{sec:asymmetric}. We consider a $K$ users S-ALOHA network in
this paper. The time dimension is partitioned into {\em slots} (each
slot lasts $\tau$ seconds). The $m$-th slot means the time interval
$(m\tau,(m+1)\tau)$, $m=0,1,2\cdots$. Fig. \ref{fig:system}
illustrates the top level system model in symmetric network. The $K$
competing users are coupled together via the transmission threshold
and power control policy.
\begin{figure}
 \begin{center}
  \resizebox{8cm}{!}{\includegraphics{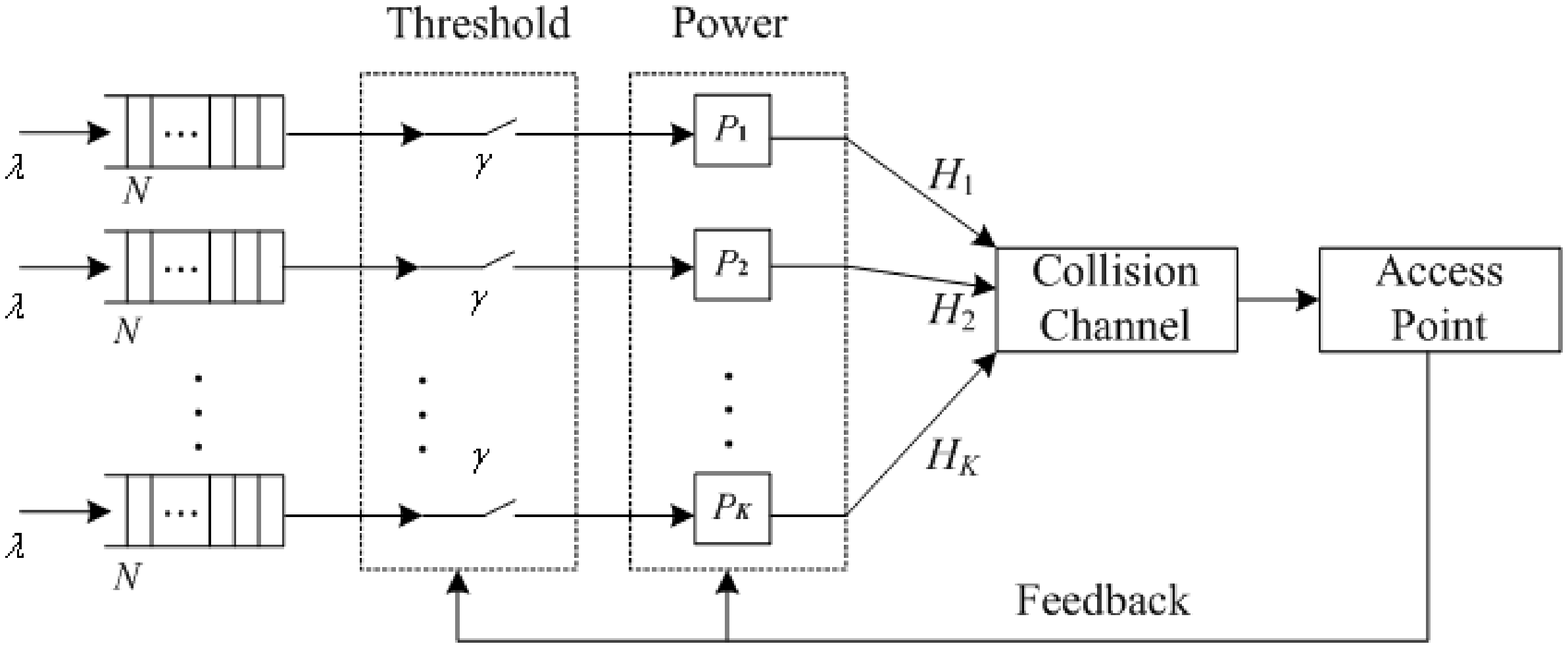}}
 \end{center}
    \caption{The system model in symmetric S-ALOHA network.}
    \label{fig:system}
\end{figure}

\subsection{Source Model}
\label{sec:source_model} For simplicity, the arrival packet rate of
all the users is assumed to follow independent Poisson distribution
with arrival rates $\lambda$ (number of packets per second). The
packet length of the data source $N_b$, follows exponential
distribution with mean packet size $\overline{N_b}$ (bits per
packet), and the buffer size is $N$ (packets). The QSI of the whole
system at the $m$-th slot is denoted by
$\mathbf{Q}_m=\{Q_{k,m}\}_{k=1}^K\in\mathcal{N}^K$, where $Q_{k,m}$
is the number of packets in the $k$-th user's buffer, and
$\mathcal{N}=\{0,1,2,...,N\}$ denotes a finite state space of local
QSI for single user. When the buffer is full, i.e, $Q_{k,m}=N$, it
will not accept any potential new packets.

\subsection{Physical Layer Model and Feedback Mechanism}
We consider a block fading channel between each user and the AP. The
CSI at $m$-th slot is denoted by
$\mathbf{H}_m=\{H_{k,m}\}_{k=1}^K\in\mathcal{S}^K$, where $H_{k,m}$
is the channel gain for user $k$, and
$\mathcal{S}=\{S_{i}\}_{i=1}^{J}$ denote a set of $J$ CSI states for
single user. $\{H_{k,m}\}_{m=1}^{\infty}$ is modeled as a stationary
ergodic process\cite{FSMC:Channel}, which is independent among
users. Specifically, let $p_{i,j}=\Pr\{H_{k,m}=S_j|H_{k,m-1}=S_i\}$
be the state transition probability and
$\pi_j=\Pr\{H_{k,\infty}=S_j\}$ be the stationary probability. All
the users share a common spectrum with a bandwidth of $W$Hz using
S-ALOHA protocol. The signal received by the AP at $m$-th slot is
given by:
\begin{equation}
\label{eq:sys_model}
y[m]=\sum\nolimits_{k=1}^{K}\sqrt{H_{k,m}}x_{k}[m]+z[m]
\end{equation}
where $x_{k}[m]$ is the transmit signal for the $k$-th user at
$m$-th slot, and $\{z[m]\}_{m=1}^{\infty}$ is the i.i.d
$\mathcal{N}(0,N_0)$ noise. Suppose that only the $k$-th user
attempts to transmit its packet to the AP at the $m$-th  slot. The
maximum achievable data rate (b/s) of the $k$-th user is given by:
\begin{equation}
\label{eq:rate}
R(P_{k,m},H_{k,m})=W\log_2\left(1+\frac{P_{k,m}H_{k,m}}{N_0W}\right)
\end{equation}
where $P_{k,m}$ and $H_{k,m}$ is the power and channel gain of
$k$-th user at $m$-th slot.

To decouple the delay-optimal design from the detailed
implementation of the modulation and coding in the physical layer,
we assumed that the data rate (\ref{eq:rate}) is achievable. In
fact, it has been shown \cite{LDPC:2001} that the Shannon's limit in
(\ref{eq:rate}) can be achieved to within $0.05$dB SNR using LDPC
with $2$K byte block size at $1\%$ PER. We consider a collision
channel for the S-ALOHA random access and hence, the AP could only
decode the data successfully when there is only one user
transmitting in any time slot. At the end of each slot, the AP
broadcasts the ACK/NAK/Collision feedback, denoted as
$\mathcal{Z}=(1,0,e)$ \cite{Bertsekas:1992}, to all the $K$ users in
the network. For instance, ACK $(Z=1)$ means that exactly one user
has transmitted the packet, and data was successfully decoded; NAK
$(Z=0)$ means that none of users has transmitted and hence, no data
was received; Collision ($Z=e$) means that at least two users have
transmitted, and the data was corrupt\footnote{Since
we assume strong coding is used by each user, we ignore the case with transmission
error.}.

\subsection{Control Policy}\label{sec:con_pol}
Each user decides whether to transmit a packet at the beginning of a
slot using a {\em threshold mechanism}. Due to symmetry, a user will
transmit if the buffer is not empty and its local CSI exceeds a
common system threshold $\gamma_{m}$\footnote{In symmetric network, users are
statistically identical (e.g. same fading channel, same arrival packet rate and same average power constraint)
and a common threshold is reasonable for fairness consideration (achieving the same average delay performance).
On the other hand, for the asymmetric network, we have considered the flexibility of different thresholds
for different users (because the users are not statistically identical anymore).}. If there are more than one
backlogged users' local CSI exceeding the threshold, then collision
will occur and none of the packets could get through. As a result,
$\gamma_{m}$ determines the priority on the access opportunity of
each user. In this paper, we shall consider an adaptive threshold
control to exploit the fading memory to minimize the system delay. A
{\em stationary threshold control policy} $\pi_\gamma$ is defined
below:
\begin{Def}[Stationary Threshold Control Policy]\footnote{We have assumed the deterministic threshold control
policy here. In fact, the same formulation and approach can be used
to deal with a {\em transmission probability} approach rather than
threshold approach. The users will transmit in a probability at different CSI state
according to a probability function $\varphi(H)\in [0,1]$. The transmission
control policy is defined as
$\pi_{\varphi}(\varphi_{m-1},Z_{m-1})=\varphi_{m}$, i.e, mapping
from the common information to current slot's transmission
probability function.}
A stationary threshold control policy $\pi_{\gamma}:
\mathcal{S}\times\mathcal{Z}\rightarrow \mathcal{S}$ is defined as
the mapping from the previous slot's system threshold $\gamma_{m-1}$
and common feedback $Z_{m-1}$ from the AP to the system threshold
$\pi_{\gamma}(\gamma_{m-1},Z_{m-1})=\gamma_{m}$ in current slot. The
set of all feasible stationary policies $\pi_{\gamma}$ is denoted as
$\mathcal {P}_{\gamma}=\left\{\pi_{\gamma}:
\pi_{\gamma}(\gamma_{m-1},Z_{m-1})\in\mathcal{S}\right\}$.
\end{Def}

The threshold control is adaptive to the common information for all
the $K$ users and hence, each user could determine the system
threshold just from the feedback from the AP.

Denote $\boldsymbol{\chi}_m=\{\mathbf{Q}_m,\mathbf{H}_{m-1},
\gamma_{m-1},Z_{m-1},\mathbf{H}_{m}\}$ to be the {\em global system
state} at the $m$-th slot and
$\chi_{k,m}=\{Q_{k,m},H_{k,m-1},\gamma_{m-1},Z_{m-1},H_{k,m}\}$ to be
the {\em local system state} which is observable locally at the
$k$-th user. Note that $\{\gamma_{m-1},Z_{m-1}\}$ is the common
information for all users, and $\{Q_{k,m},H_{k,m-1},H_{k,m}\}$ is the
local information for the $k$-th user. Given the observed local
system state realization $\chi_{k,m}$, the $k$-th user should adjust
the transmission power according to a {\em stationary power control
policy} $\pi_{P}$, which is formally defined below.
\begin{Def}[Stationary Power Control Policy]\footnote{When {\em
transmission probability} approach is applied, The local system state for the power control
policy should be
$\chi_{k,m}=\{Q_{k,m},H_{k,m-1},\varphi_{m-1}(H),Z_{m-1},H_{k,m}\}$. We further discretize the transmission probability function
$\varphi(H)$ to make the system state discrete. The optimization of the control policy is the similar solution path as the threshold approach.}
The stationary power control policy for single user $\pi_{P}:
\mathcal{N}\times
\mathcal{S}\times\mathcal{S}\times\mathcal{Z}\times\mathcal{S}\rightarrow
\mathbb{R}$ is defined as the mapping from current local system
state for $k$-th user, to current slot's transmit power
$\pi_{P}(\chi_{k,m})=P_{k,m}$. The set of all feasible stationary
policies $\pi_{P}$ is defined as $\mathcal{P}_{P}=\{\pi_{P}:
\pi_{P}(\chi_{k,m})\geq 0\}$. Note that $P_{k,m}=0$ for all
$H_{k,m}<\gamma_{m}$, because current slot's CSI is lower than the
threshold.
\end{Def}

For simplicity, let $\pi=\left\{\pi_{\gamma},\pi_{P}\right\}$ denote
the joint control policy of all the $K$ users. The corresponding set
of stationary joint control policy is given by
$\mathcal{P}=\{\mathcal {P}_{\gamma},\mathcal{P}_{P}\}$ . As a
result,
$\pi(\boldsymbol{\chi}_m)=\{\pi_{\gamma}(\gamma_{m-1},Z_{m-1}),\{\pi_{P}(\chi_{k,m})\}_{k=1}^K\}=\{\gamma_m,\{P_{k,m}\}_{k=1}^K\}$.

In practice, the user with empty buffer will not transmit even if
its local CSI exceeds the system threshold, and this is one
important technical challenge in the delay analysis of S-ALOHA
network. Instead of dealing with the delay for the original S-ALOHA
network, we shall utilize the technique of {\em dominant system}
\cite{Dominant:System} to obtain an upper bound of the delay
performance. In the dominant system, we assume users always have
{\em virtual packets} to send (even if the buffer is empty) and
therefore, the delay performance associated with the dominant system
is always an upper bound of the actual system. Yet, the bound is
asymptotically tight in the large delay regime.

\section{Problem Formulation}\label{sec:problem}
In this section, we shall first formulate the delay-optimal control
policy problem, and then formally introduce DEC-MDP model. We show
that our problem belongs to the memoryless policy case of {\em
DEC-MDP} in which finding the optimal policy is computationally intractable.

\subsection{System Delay}
Due to the nature of random access, the queues of the $K$ users are
coupled together via the control policy. When the system threshold
is small, there will be a high probability of having more than one
users sending packet, leading to collision and wastage of power
resource. On the other hand, when the system threshold is high,
there is non-negligible probability of having no user sending
packet, leading to wastage of idle time. Similarly, individual user
may want to increase the transmit power when the local CSI is good
but if there is collision, the transmitted power is wasted. In this
paper, we seek to find an optimal stationary control policy to
minimize the average delays of the $K$ competing users subject to
average transmit power constraint for single user. Specifically,
the average delay for the $k$-th user is
\begin{equation}
\label{eq:avg_delay} \overline{T_k}(\pi) = \limsup_{M}
\frac{1}{M}\mathbb{E}\left[\sum\nolimits_{m=1}^MQ_{k,m}\right]
\forall k\in \{1,...,K\}
\end{equation}
and average transmit power constraint is given by:
\begin{equation}
\label{eq:pwr_con}
\overline{P_{k}}(\pi)=\limsup_{M}\frac{1}{M}\mathbb{E}\left[\sum\nolimits_{m=1}^MP_{k,m}\right]
\leq P_{0}
\end{equation}
where $P_{k,m}$ is the transmitted power determined by
$\pi(\chi_{k,m})$, and $P_{0}$ is the average power constraint for
single user. The delay-optimal control problem can be formally
written as:
\begin{Prob}[Delay Optimal S-ALOHA Control Policy]
\label{prob:system} Find a stationary control policy $\pi$ that
minimizes
\begin{eqnarray}
\label{eq:asymm}
J^{\pi}(\chi_{1})&=&\sum\nolimits_{k}\left[\overline{T_k}(\pi)
+\xi\overline{P_{k}}\right]\\
&=&\limsup_{M}\frac{1}{M}\sum\nolimits_{m,k}\mathbb{E} \left[
g_k(\chi_{k,m},\pi(\chi_{k,m})) \right]\nonumber
\end{eqnarray} where $g_k(\chi_{k,m},\pi(\chi_{k,m}))=Q_{k,m} + \xi
P_{k,m}$ is the per-stage system price \footnote{In
\cite{Puterman:2005}, it is named {\em price}, yet called {\em cost}
in \cite{Bertsekas_V1:1995}. If it is called a {\em reward}, then
the problem is to maximize the reward.} function and $\xi>0$ is the
Lagrange multipliers corresponding to the average power constraints
in (\ref{eq:pwr_con}).
\end{Prob}

\subsection{DEC-MDP Model}
\label{sec:dec-mdp} Problem \ref{prob:system} in (\ref{eq:asymm}) in
fact belongs to the class of infinite horizon DEC-MDP, which is
formally defined below \cite{DEC-MDP:2002}:
\begin{Def}[DEC-MDP]
An $K$-agent DEC-MDP is given as a tuple
\begin{equation}
\{I,S,A,P(s'|s,a),R(s,a),p_0\}\nonumber
\end{equation}
where $I=\{1,..,K\}$ is a set of agents, $S=\{S_k\}$ is a finite set
of states, $A=\{A_k\}$ is a set of joint actions, $S_k$ and $A_k$ is
available to agent $k$, $P(s'|s,a)$ is the transition probability
that transits from state $s$ to $s'$ given joint action $a$ taken,
$R(s,a)$ is the price function given in state $s$ and joint action
$a$ taken, $p_0$ is the initial state distribution of the system
\footnote{More details about the infinite horizon DEC-MDP is provided
in \cite{DEC-MDP:PI} and the references therein.}.
\end{Def}

The association between Problem \ref{prob:system} and DEC-MDP is as
follows: We have $s_k=\chi_{k,m}$, $a_k=\pi$, $P(s'|s,a)$ can be
easily obtained from local system state transition $P(s_k'|s_k,a_k)$
given in lemma \ref{lem:state_tran}, and
$R(s,a)=\sum\nolimits_{k=1}^K\left[
g_k(\chi_{k,m},\pi_k(\chi_{k,m})) \right]$.

When the policy is given by a mapping from histories of local system
state $\{s_{k,1},...s_{k,m},...\}$ to actions $a_k\in A_k$, the
problem is {\em undecidable}\footnote{Undecidability is a formal
term in the computational complexity theory used to address the
computability and complexity issue on decision problems. A decision
problem is called (recursively) undecidable if no algorithm can
decide it, such as for Turing¡¯s halting problem. It has nothing to
do with whether an optimal solution of an optimization problem exist
or not (or have multiple solutions), because that depends
fundamentally on the structure of the problem. Yet, even if an
optimal solution of an undecidable problem exists theoretically,
there is no algorithm (iterative) to obtain the optimal solution and
terminates \cite{undecidability:cornerstone}.} \cite{Madani:POMDP}.
When the policy is given by a mapping from current local system
state $s_k$ to actions $a_k\in A_k$, it is called memoryless or
reactive policy. In that case, the problem is {\em
NP-hard}\cite{Littman:1994,Meuleau:1999}. As a result, it is very
difficult to obtain the optimal solution for the Problem
\ref{prob:system}. Instead of brute-force solution, we shall try to
exploit the special structure of our problem to obtain low
complexity solutions.

\section{Delay-Optimal Control Problem in Symmetric Network}\label{sec:symm}
In this section, we will focus on exploiting the special structure
of the symmetric network. We shall first solve an optimal power
control policy by a reduced state MDP for any given threshold
control policy. To solve the threshold control problem, we utilize
the collision channel mechanism and derive a low complexity
solution.

\subsection{Embedded Markov Chain under a Given Threshold Control Policy}
For a given threshold control policy, the observed local system
state for single user is actually evolved as a Markov chain.
Specifically, the transition probability conditioned on the power
control policy $\pi_{P}$ is given in the following lemma.

\begin{Lem}[Transition Probability of Local System State]
\label{lem:state_tran}
 At $m$-th slot, the current state of the $k$-th user is
$\chi_{k,m}=\{Q_{k,m},H_{k,m-1},\gamma_{m-1},Z_{m-1},H_{k,m}\}$.
Conditioned on $\pi_{P}$, the transition probability to the next
slot is given by:
\begin{equation}
\label{eq:tran_local}
\begin{array}{l}
\Pr\{\chi_{k,m+1}|\chi_{k,m},\pi_{P}(\chi_{k,m})\}=\mathbb{I}\left(\gamma_m=\pi_{\gamma}(\gamma_{m-1},Z_{m-1})\right)\\
\times\Pr\{H_{k,m+1}|H_{k,m}\}\Pr\{Z_{m}|Z_{m-1},\{H_{k,i},\gamma_{i}\}_{i=m-1}^m\}\\
\times\Pr\{Q_{k,m+1}|\chi_{k,m},Z_{m},\pi_{P}(\chi_{k,m})\}\end{array}
\end{equation}
where $\mathbb{I}(X)$ is an indicate function, which is equal to 1
when event $X$ is true and 0 otherwise.
\end{Lem}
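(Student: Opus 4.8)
The plan is to establish the factorization in \eqref{eq:tran_local} by showing that the joint transition probability of the local system state decomposes as a product of conditionally independent terms, organized according to the natural causal ordering of how the next state's components are generated from the current state. The local state at slot $m+1$ is $\chi_{k,m+1}=\{Q_{k,m+1},H_{k,m},\gamma_{m},Z_{m},H_{k,m+1}\}$, so I would begin by identifying which components of $\chi_{k,m+1}$ are determined by which pieces of $\chi_{k,m}$ together with the power action $\pi_P(\chi_{k,m})$.

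First I would handle the two components of $\chi_{k,m+1}$ that are ``carried over'' deterministically or semi-deterministically. The threshold $\gamma_m$ appearing in the next state is, by the Stationary Threshold Control Policy definition, exactly $\pi_\gamma(\gamma_{m-1},Z_{m-1})$, a deterministic function of quantities already contained in $\chi_{k,m}$; this is why the first factor is the indicator $\mathbb{I}(\gamma_m=\pi_\gamma(\gamma_{m-1},Z_{m-1}))$, which equals $1$ on the consistent transition and $0$ otherwise, pruning out inconsistent next-states. Next I would invoke the Markov property of the fading process: by the FSMC model, $H_{k,m+1}$ depends only on $H_{k,m}$ through $p_{i,j}=\Pr\{H_{k,m+1}=S_j\mid H_{k,m}=S_i\}$ and is independent of the queue, the threshold, and the feedback given $H_{k,m}$. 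This yields the factor $\Pr\{H_{k,m+1}\mid H_{k,m}\}$.

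The next step is the feedback term $\Pr\{Z_m\mid Z_{m-1},\{H_{k,i},\gamma_i\}_{i=m-1}^m\}$. Here I would argue that the current-slot feedback $Z_m$ (ACK/NAK/Collision) is generated by the collision-channel rule at the end of slot $m$: it is determined by which users' CSI exceed the threshold $\gamma_m$ and which buffers are nonempty. Conditioned on the relevant channel gains and thresholds over slots $m-1$ and $m$ (and on $Z_{m-1}$, which together with $\gamma_{m-1}$ fixes $\gamma_m$), the feedback distribution is specified; from the local viewpoint the effect of the other $K-1$ users is summarized into this conditional law, since their states are independent of user $k$'s. Finally, the queue update $\Pr\{Q_{k,m+1}\mid\chi_{k,m},Z_m,\pi_P(\chi_{k,m})\}$ follows from the source model: $Q_{k,m+1}$ is obtained from $Q_{k,m}$ by adding Poisson arrivals and subtracting a successful departure, where a departure occurs only when $Z_m=1$ \emph{and} this user was the transmitting one at power $\pi_P(\chi_{k,m})$ over channel $H_{k,m}$; thus it depends on the current state, the realized feedback, and the power action, justifying the last factor.

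The main obstacle I anticipate is justifying the conditional independence that lets the joint probability split cleanly into this product — in particular, verifying that conditioning on $Z_m$ in the queue factor correctly captures the coupling with the other users' queues and channels, so that no residual dependence on the global state $\boldsymbol{\chi}_m$ leaks into the single-user factors. I would address this by applying the chain rule $\Pr\{Q_{k,m+1},H_{k,m+1},\gamma_m,Z_m\mid\chi_{k,m},\pi_P\}$ in the order $\gamma_m$, then $H_{k,m+1}$, then $Z_m$, then $Q_{k,m+1}$, and at each stage using (i) the determinism of $\pi_\gamma$, (ii) the FSMC Markov property and inter-user independence of $\{H_{k,m}\}$, and (iii) the fact that arrivals are i.i.d.\ Poisson independent across users and slots. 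The key structural point is that once $Z_m$ is conditioned upon, the departure event for user $k$ is fully determined by its own local variables, so the queue evolution no longer references other users — this is precisely what makes the embedded chain a \emph{single-user} Markov chain and underlies the reduced-state MDP decomposition claimed in the introduction.
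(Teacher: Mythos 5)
Your chain-rule factorization (threshold indicator, then FSMC transition, then feedback, then queue update) matches the skeleton of the paper's argument, but there is a genuine gap in the step that carries all the difficulty: the feedback factor. You write that $Z_m$ is determined by ``which users' CSI exceed the threshold $\gamma_m$ \emph{and which buffers are nonempty},'' and then claim the other users' influence collapses into the conditional law $\Pr\{Z_m\mid Z_{m-1},\{H_{k,i},\gamma_i\}_{i=m-1}^m\}$ ``since their states are independent of user $k$'s.'' These two statements are incompatible. The other users' \emph{queue} states are not independent of user $k$'s history: all $K$ queues are coupled through the collision/feedback process, and their conditional distribution given user $k$'s local information depends on the entire past, not on $\{Z_{m-1},H_{k,m-1},\gamma_{m-1}\}$ alone. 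So if transmissions really depended on buffer occupancy, the displayed kernel would not be well defined and the local state would not be Markov. The paper avoids this precisely by working in the \emph{dominant system} (Section II-C): every user always has a virtual packet, so the transmission event of any user is $A_{k,m}=\{H_{k,m}\geq\gamma_{k,m}\}$, a pure CSI event (see Definition \ref{def:transmission} in the appendix). Only then does inter-user independence of the fading processes make $Z_m$ conditionally independent of everything except $Z_{m-1}$ and the local CSI/threshold pair, which is what licenses the factorization. Your proof must invoke this assumption explicitly; without it the key conditional-independence step fails.

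A secondary point: the paper's proof is not merely the factorization but the explicit construction of the two nontrivial kernels, and your proposal leaves both abstract. For the feedback factor, the paper performs a Bayesian update of the other users' CSI distributions given the transmission events implied by $Z_{m-1}$ and user $k$'s own status (the modified stationary probabilities $\widetilde{\pi}_j$, $\widehat{\pi}_j$, the quantities $\upsilon,\zeta$ in (\ref{eq:tx_event}), and the case analysis $Z_{m-1}\in\{0,1,e\}$); these expressions are used later in (\ref{eq:gamma_symm}) and (\ref{eq:pwr_pol}), so the lemma's content is hollow without them. For the queue factor, the birth--death form $p_{q,q+1}=\lambda\tau$, $p_{q,q-1}=\mu\tau\,\mathbb{I}(Z_m=1)$ requires the explicit small-slot assumption $\tau\ll 1/\lambda$, $\tau\ll 1/\mu$ to rule out multiple arrivals/departures per slot, which your sketch omits.
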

\begin{proof}
Please refer to appendix \ref{app:tran}.
\end{proof}

\subsection{Reduced State MDP Formulation and Optimal Power Control Policy}
For a given threshold control policy in (\ref{eq:asymm}), we seek to
find an optimal power control policy to minimize
\begin{equation}
\label{eq:fix_threshold}
J^{\pi_{P}}(\chi_{1})=\lim_{M}\frac{1}{M}\sum\nolimits_{k,m}\mathbb{E}
\left[ g(\chi_{k,m},\pi_P(\chi_{k,m})) \right]
\end{equation}
Note that, power control policy is a function of local system state,
and for the $k$-th user, its local system state transition
probability is given in (\ref{eq:tran_local}). The optimal power
control policy in (\ref{eq:fix_threshold}) could be decoupled into
$K$ single-user optimization problems, which can be modeled as a MDP
and summarized as following lemma.
\begin{Lem}[Power Control Optimization for Single User]
\label{lem:pwr} The optimal power control policy\footnote{The power
action set is compact, due to finite transmit power in practice. By
Theorem 8.4.7 in \cite{Puterman:2005}, there exists a stationary and
deterministic policy that is average optimal. Thus, it is no loss of
optimality for this power control policy.} minimizing the whole
system delay can be modeled as a single user MDP problem, with state
space given by local system state $\chi_{m}$ (ignoring user index
$k$). The transition probability is given by
$\Pr\{\chi_{m+1}|\chi_{m},\pi_{P}(\chi_{m})\}$ from lemma
\ref{lem:state_tran}, and average price is given by:
\begin{equation}
J^{\pi_{P}}(\chi_{1})=\lim_{M}\frac{1}{M}\sum\nolimits_{m}\mathbb{E}
\left[ g(\chi_{m},\pi_P(\chi_{m})) \right]
\end{equation}
\end{Lem}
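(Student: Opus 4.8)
The plan is to establish two facts: that the aggregate objective in (\ref{eq:fix_threshold}) separates into $K$ independent single-user problems once the threshold policy $\pi_{\gamma}$ is fixed, and that each such problem is a genuine average-cost MDP whose dynamics are precisely the local kernel of Lemma \ref{lem:state_tran}. First I would use the additive, purely local structure of the per-stage price. Because $g(\chi_{k,m},\pi_{P}(\chi_{k,m}))=Q_{k,m}+\xi P_{k,m}$ involves only user $k$'s local state and local power action, the objective factorizes as $J^{\pi_{P}}(\chi_{1})=\sum_{k}\lim_{M}\frac{1}{M}\sum_{m}\mathbb{E}[g(\chi_{k,m},\pi_{P}(\chi_{k,m}))]$, each summand depending on user $k$'s local trajectory alone.

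The crux of the argument, and the main obstacle, is to show that for a fixed $\pi_{\gamma}$ the marginal dynamics of $\chi_{k,m}$ depend on the power policy only through user $k$'s own power $\pi_{P}(\chi_{k,m})$, and not through the power choices of the other users; otherwise the collision channel would couple the $K$ power-control problems. Here the collision mechanism together with the dominant-system assumption is decisive: in the dominant system every user is always backlogged, so whether a user transmits is decided solely by the threshold test $H_{k,m}\geq\gamma_{m}$ and is entirely independent of the transmit power. Hence the feedback event $Z_{m}$ --- and in particular the success event in which user $k$ transmits while all other users' CSI falls below $\gamma_{m}$ --- has a law determined by $\pi_{\gamma}$ and the CSI marginals $\{\pi_{j}\}$ alone. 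Inspecting the factorization (\ref{eq:tran_local}), the power policy enters only the service factor $\Pr\{Q_{k,m+1}|\chi_{k,m},Z_{m},\pi_{P}(\chi_{k,m})\}$, which depends on user $k$'s power and no other. Consequently minimizing $J^{\pi_{P}}$ over the joint power policy reduces to minimizing each summand separately, yielding $K$ decoupled MDPs with state $\chi_{k,m}$, transition kernel (\ref{eq:tran_local}), and per-stage cost $g$.

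It remains to collapse the $K$ problems to one and to confirm well-posedness. Since the network is symmetric --- identical CSI statistics $\{p_{i,j},\pi_{j}\}$, arrival rate $\lambda$, power budget $P_{0}$, and a common threshold --- the $K$ decoupled MDPs coincide after dropping the index $k$, so solving the single-user MDP of the statement suffices. The Markov property of the state $\chi_{m}$ is exactly Lemma \ref{lem:state_tran}, and existence of a stationary deterministic average-optimal power control policy follows from the compactness of the practically bounded power action set via Theorem 8.4.7 of \cite{Puterman:2005}, as noted in the footnote to the statement.
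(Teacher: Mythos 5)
Your proposal is correct and follows essentially the same route as the paper, which justifies the lemma only informally by combining the additively separable, purely local per-stage price $g(\chi_{k,m},\pi_P(\chi_{k,m}))=Q_{k,m}+\xi P_{k,m}$ with the local-state Markov kernel of Lemma \ref{lem:state_tran} and the symmetry of the network. The main thing you add is making explicit the decoupling mechanism --- that the collision channel together with the dominant-system (always-backlogged) assumption renders the law of $Z_m$ independent of the other users' power choices, so the power policies enter only through each user's own queue-transition factor --- a point the paper leaves buried in the statement and proof of Lemma \ref{lem:state_tran} rather than restating here.
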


For the infinite horizon MDP, the optimal policy can be obtained by
solving the {\em bellman equation} recursively w.r.t
$(\theta,\{V(\chi)\})$ as below:
\begin{eqnarray}
\label{eq:bellman}
&&V(\chi_m)+\theta=\inf_{a(\chi_m)}\biggl\{g(\chi_m,a(\chi_m))+\nonumber\\
&&\quad\quad\sum_{\chi_{m+1}}\Pr\{\chi_{m+1}|\chi_m,a(\chi_m)\}V(\chi_{m+1})\biggr\}
\end{eqnarray}
where $a(\chi_m)=\pi_P(\chi_m)$ is the power allocation when state
is $\chi_m$. If there is a $(\theta,\{V(\chi)\})$ satisfying
(\ref{eq:bellman}), then $\theta$ is the optimal average price per
stage $J^{\pi_{P}}(\chi_{1})$ and the corresponding optimizing
policy is given by $a^*(\chi_m)$, the optimizing action of
(\ref{eq:bellman}) at state $\chi_m$.

Value or policy iteration can be used to solve the bellman equation
(\ref{eq:bellman}) \cite{Puterman:2005,Bertsekas_V1:1995}. The challenge of the two
iteration algorithm lies in the size of the local state space. To
reduce the complexity, we shall recast the original MDP in lemma
\ref{lem:pwr} into a {\em reduced state MDP}. Let's partition the
policy $\pi_P$ into a collection of actions, the above MDP could be
further reduced to a simpler MDP over a {\em reduced state}
$\hat{\chi}_m=\{Q_{m},H_{m-1},\gamma_{m-1},Z_{m-1}\}$
only\footnote{A similar technique was also used in
\cite{Vince:delay,Vince:reduced}}. Specifically, we have following
definition:
\begin{Def}[Conditional Action]
\label{def:con_action} Given a policy $\pi_P$, we define
$\pi_\mathbf{P}(\hat{\chi}_m)=\{\pi_P(\chi_m):\chi_m=(\hat{\chi}_m,H_m)\forall
H_m\}$ as the collection of actions under a given reduced state
$\hat{\chi}_m$ for all possible current slot's CSI $H_m$. The policy
$\pi_P$ is therefore equal to the union of all conditional actions,
i.e., $\pi_P=\bigcup_{\hat{\chi}}\pi_\mathbf{P}(\hat{\chi})$.
\end{Def}

Taking conditional expectation (conditioned on $\hat{\chi}$) on both
sides of (\ref{eq:bellman}), and letting
$\widetilde{V}(\hat{\chi}_m)=\mathbb{E}[V(\chi_m)|\hat{\chi}_m]=\sum\limits_{H_m}\Pr\{H_{m}|H_{m-1}\}V(\chi_m)$,
the Bellman equation becomes:
\begin{eqnarray}
\label{eq:red_bellman}
&&\widetilde{V}(\hat{\chi}_m)+\theta =\inf_{a(\chi_m)}\Biggl\{\sum_{H_m}\Pr\{H_{m}|H_{m-1}\}\biggl(g(\chi_m,a(\chi_m))\nonumber\\
&&+\sum_{\chi_{m+1}}\Pr\{\chi_{m+1}|\chi_{m},a(\chi_{m})\}V(\chi_{m+1})\biggr)\Biggr\}\nonumber\\
&&=\inf_{a(\chi_m)}\biggl\{\sum\nolimits_{H_m}\Pr\{H_{m}|H_{m-1}\}g(\chi_m,a(\chi_m))\nonumber\\
&&+\sum_{\hat{\chi}_{m+1}}\sum_{H_m}\Pr\{H_{m}|H_{m-1}\}\Pr\{\hat{\chi}_{m+1}|\chi_{m},a(\chi_m)\}\nonumber\\
&&\times\sum_{H_{m+1}}\Pr\{H_{m+1}|H_{m}\}V(\hat{\chi}_{m+1},H_{m+1})\biggr\}\nonumber\\
&&=\inf_{\mathbf{a}(\hat{\chi}_m)}\biggl\{\widetilde{g}(\hat{\chi}_m,\mathbf{a}(\hat{\chi}_m))+\nonumber\\
&&\sum\nolimits_{\hat{\chi}_{m+1}}\Pr\{\hat{\chi}_{m+1}|\hat{\chi}_{m},\mathbf{a}(\hat{\chi}_m)\}\widetilde{V}(\hat{\chi}_{m+1})\biggr\}
\end{eqnarray}
where $a(\chi_m)=\pi_P(\chi_m)$ is a single power allocation action
at state $\chi_m$ and
$\mathbf{a}(\hat{\chi}_m)=\pi_\mathbf{P}(\hat{\chi}_m)$ is the
collection of power allocation actions under a given reduced state
$\hat{\chi}_m$. Furthermore, $\widetilde{g}(\hat{\chi}_m,
\mathbf{a}(\hat{\chi}_m))$ is the conditional per-stage price
function given by:
\begin{eqnarray}
\label{eq:redu_g}
\widetilde{g}(\hat{\chi}_m,\mathbf{a}(\hat{\chi}_m))&=&\mathbb{E}[g(\hat{\chi}_m,H_m,a(\chi_m))|\hat{\chi}_m]\\
&=&Q_{m}+\xi\left(\sum\nolimits_{H_{m}}\Pr\{H_{m}|H_{m-1}\}P_m\right)\nonumber
\end{eqnarray}

As a result, the original MDP is equivalent to a reduced state MDP,
which is summarized in the following lemma.
\begin{Lem}[Equivalent MDP on a Reduced State Space]
\label{Lem:reduced} The original MDP in lemma \ref{lem:pwr} is
equivalent to the following reduced state MDP with state space given
by $\hat{\chi}_m$, average price given by:
\begin{eqnarray}
J^{\pi_{P}}(\chi_{1})=\limsup_{M}\frac{1}{M}\sum\nolimits_{m=1}^{M}\mathbb{E}
\left[\widetilde{g}(\hat{\chi}_m,\mathbf{a}(\hat{\chi}_m))\right]
\end{eqnarray}
$\Pr\{\hat{\chi}_{m+1}|\hat{\chi}_{m},\mathbf{a}(\hat{\chi}_m)\}$ is
the states transition kernel given by:
\begin{eqnarray}
&&\Pr\{\hat{\chi}_{m+1}|\hat{\chi}_{m},\mathbf{a}(\hat{\chi}_m)\}\\
&&\quad=\sum\nolimits_{H_m}\Pr\{H_{m}|H_{m-1}\}\Pr\{\hat{\chi}_{m+1}|\chi_{m},a(\chi_m)\}\nonumber
\end{eqnarray}
\end{Lem}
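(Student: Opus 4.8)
The plan is to show that the reduced-state MDP in the statement reproduces exactly the fixed-point relation obtained in the chain of equalities (\ref{eq:red_bellman}), so that the two MDPs share the same optimal average price $\theta$ and the reduced optimal policy induces an optimal policy for the original problem. I would start from the Bellman equation (\ref{eq:bellman}) for the full state $\chi_m = (\hat{\chi}_m, H_m)$ established in Lemma \ref{lem:pwr}, and apply the conditional-expectation operator $\mathbb{E}[\,\cdot\,|\hat{\chi}_m] = \sum_{H_m}\Pr\{H_m|H_{m-1}\}(\cdot)$ to both sides. On the left this produces $\widetilde{V}(\hat{\chi}_m) + \theta$ by the definition of $\widetilde{V}$; the real work is all on the right-hand side.

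The crucial step is to interchange the infimum with the sum over $H_m$. Here I would invoke the conditional-action reformulation of Definition \ref{def:con_action}: the power action $a(\chi_m) = \pi_P(\hat{\chi}_m, H_m)$ selected for a particular current CSI $H_m$ enters the objective only through the single summand weighted by $\Pr\{H_m|H_{m-1}\}$, and never interacts with the actions chosen for any other value of $H_m$. Since the weights are nonnegative, minimizing the weighted sum over the collection $\mathbf{a}(\hat{\chi}_m) = \{a(\chi_m): \text{all } H_m\}$ decomposes term by term, i.e. $\inf_{\mathbf{a}(\hat{\chi}_m)}\sum_{H_m}\Pr\{H_m|H_{m-1}\}\,f(H_m, a(\chi_m)) = \sum_{H_m}\Pr\{H_m|H_{m-1}\}\inf_{a(\chi_m)} f(H_m, a(\chi_m))$. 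This is precisely what is needed to pass from the first to the last line of (\ref{eq:red_bellman}), and it is the one place where the special structure — that the action indexed by $H_m$ affects a disjoint branch of the cost — is used. I expect this interchange to be the \emph{main obstacle}, since it is what legitimizes collapsing the per-CSI actions into a single conditional action without loss of optimality.

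It then remains to identify the reduced transition kernel and the reduced price. For the continuation term I would write $V(\chi_{m+1}) = V(\hat{\chi}_{m+1}, H_{m+1})$ and use the factorization of the full transition in Lemma \ref{lem:state_tran}: because the evolution of $H_{m+1}$ is governed by the Markov kernel $\Pr\{H_{m+1}|H_m\}$ independently of how $\hat{\chi}_{m+1}$ is reached, the inner sum telescopes into $\sum_{\hat{\chi}_{m+1}}\Pr\{\hat{\chi}_{m+1}|\chi_m, a(\chi_m)\}\,\widetilde{V}(\hat{\chi}_{m+1})$, where $\widetilde{V}(\hat{\chi}_{m+1}) = \sum_{H_{m+1}}\Pr\{H_{m+1}|H_m\}V(\hat{\chi}_{m+1}, H_{m+1})$. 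Averaging over $H_m$ yields exactly the stated kernel $\Pr\{\hat{\chi}_{m+1}|\hat{\chi}_m, \mathbf{a}(\hat{\chi}_m)\} = \sum_{H_m}\Pr\{H_m|H_{m-1}\}\Pr\{\hat{\chi}_{m+1}|\chi_m, a(\chi_m)\}$. For the stage cost, since $Q_m$ is a component of $\hat{\chi}_m$ and hence constant over $H_m$, the conditional expectation of $g = Q_m + \xi P_m$ reduces to $\widetilde{g}(\hat{\chi}_m, \mathbf{a}(\hat{\chi}_m)) = Q_m + \xi\sum_{H_m}\Pr\{H_m|H_{m-1}\}P_m$, matching (\ref{eq:redu_g}).

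Finally I would close the equivalence. The resulting relation is a genuine average-cost Bellman equation over the reduced state $\hat{\chi}_m$ with the kernel and price above, so by the same argument applied to (\ref{eq:bellman}) its solution $(\theta, \widetilde{V})$ gives the optimal average price, and expanding the optimal conditional action $\mathbf{a}^*(\hat{\chi}_m)$ back into its components $a^*(\chi_m)$ recovers an optimal policy for the original MDP of Lemma \ref{lem:pwr}. The average-cost objective transforms identically because $\mathbb{E}[\widetilde{g}(\hat{\chi}_m, \mathbf{a}(\hat{\chi}_m))] = \mathbb{E}[g(\chi_m, a(\chi_m))]$ by the tower property, so the two $J^{\pi_P}(\chi_1)$ expressions coincide term by term and hence in the $\limsup$.
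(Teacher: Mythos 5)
Your proof is correct and follows essentially the same route as the paper: the paper's own justification is exactly the chain of equalities in (\ref{eq:red_bellman}), obtained by conditioning the Bellman equation (\ref{eq:bellman}) on $\hat{\chi}_m$, exchanging the infimum with the sum over $H_m$ via the conditional actions of Definition \ref{def:con_action}, and factorizing the transition kernel to identify $\widetilde{g}$ and the reduced kernel. The only difference is that you explicitly justify the term-by-term decoupling that legitimizes the inf--sum interchange, a step the paper performs implicitly.
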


The bellman equation for reduced state MDP is given in
(\ref{eq:red_bellman}). Note that while the reduced state MDP is
defined over the partial state $\hat{\chi}$, the power allocation is
still a function of the original complete local system state. In
fact, for realization of the reduced state $\hat{\chi}_m$, the
solution of the reduced MDP gives the conditional actions for
different realization of $H_m$.

\subsection{Delay-Optimal Power Control Solution}
Value or policy iteration can be used to solve the bellman equation
(\ref{eq:red_bellman}), and the convergence of the iteration
algorithms is ensured by the following lemma.
\begin{Lem}[Decidability of the Unichain of Reduced State]
\label{lem:unichain} The unichain \footnote{In \cite{Puterman:2005},
unichain is defined as a single recurrent class plus a possibly
empty set of transient states.} of the reduced state MDP in lemma
\ref{Lem:reduced} is decidable under all power control policy.
\end{Lem}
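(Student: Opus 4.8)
The plan is to establish the unichain property by exhibiting a single \emph{reference} reduced state that is accessible, with positive probability in finitely many slots, from every reduced state under an arbitrary but fixed power control policy $\pi_P$. Since the reduced state space of lemma~\ref{Lem:reduced} is finite, the existence of such a globally accessible state forces every recurrent reduced state to communicate with it (a recurrent class is closed, so if the reference is reachable from one of its states it must belong to that class); hence all recurrent states lie in one class, which is precisely the unichain structure, and because the induced chain is finite this is a decidable check. The first step is to split $\hat{\chi}_m$ into the \emph{queue} $Q_m$ and the \emph{environment} triple $\mathbf{e}_m=(H_{m-1},\gamma_{m-1},Z_{m-1})$ and to analyze their dynamics separately using the kernel of lemma~\ref{lem:state_tran}.

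The key observation is that, in the dominant system, $\mathbf{e}_m$ evolves as an \emph{autonomous} Markov chain that does not depend on $\pi_P$: the CSI $H_{m-1}\to H_m$ follows the ergodic FSMC transitions $\Pr\{H_m|H_{m-1}\}$, the threshold updates deterministically through $\gamma_m=\pi_\gamma(\gamma_{m-1},Z_{m-1})$, and the feedback $Z_m$ is decided solely by how many of the $K$ users have CSI above $\gamma_m$ (transmit power does not affect collisions on the collision channel, and virtual packets are always present). I would first argue that this environment chain is unichain: since the FSMC is irreducible, the tagged user's CSI can be steered to any target in finitely many slots with positive probability, and by the cross-user independence of the CSI any admissible feedback symbol $Z_m$ can be realized, which in turn drives the deterministic recursion $\gamma_m=\pi_\gamma(\cdot)$ toward a fixed reference pair $(\gamma^\ast,Z^\ast)$. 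This yields a reference environment state $\mathbf{e}^\ast=(H^\ast,\gamma^\ast,Z^\ast)$ reachable from every $\mathbf{e}$.

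Next I would handle the queue, where the crucial point is that the buffer can always be driven to the reference level $Q=N$ regardless of $\pi_P$: since arrivals are Poisson and exogenous, the number of arrivals $A$ in a slot satisfies $\Pr\{A\ge N\}>0$ independently of the power action and of the environment, and such a batch fills the finite buffer to $Q=N$ no matter how many packets depart. Because the environment transition in a slot is independent of the arrivals, I would concatenate the two constructions as follows: first use the autonomous environment chain to steer $\mathbf{e}_m$ in finitely many slots to a one-step predecessor of $\mathbf{e}^\ast$; then, in the final slot, realize the positive-probability environment transition into $\mathbf{e}^\ast$ \emph{simultaneously} with an arrival batch $A\ge N$. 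This produces a positive-probability path from any $\hat{\chi}$ to $\hat{\chi}^\ast=(N,H^\ast,\gamma^\ast,Z^\ast)$ that is valid for every $\pi_P$, so the reduced chain has a single recurrent class and is therefore unichain.

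The main obstacle is the middle step: the threshold and feedback are coupled, with $\gamma_m$ determined deterministically by the fixed $\pi_\gamma$ while $Z_m$ is a CSI-driven random symbol, so they cannot be treated independently. The delicate part is to show that the reference pair $(\gamma^\ast,Z^\ast)$ is genuinely reachable from every $(\gamma,Z)$ under $\pi_\gamma$; this requires constructing an explicit finite CSI sample path---using irreducibility of the FSMC together with the independence of the other users' CSI---whose induced feedback sequence walks the deterministic threshold recursion into its recurrent set, and verifying that each required feedback symbol along the path occurs with strictly positive probability. Once this reachability of $\mathbf{e}^\ast$ is secured, the queue-filling and concatenation steps are routine.
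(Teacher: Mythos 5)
Your proposal attempts to prove a stronger statement than the lemma actually makes, and the stronger statement is false in general. Lemma~\ref{lem:unichain} does not claim that the reduced-state chain has a \emph{single} recurrent class; it claims that the unichain structure (the decomposition into recurrent classes plus transient states) can be \emph{decided}, and that this decomposition is the same under every power control policy. Indeed, the text immediately following the lemma states that ``the number of unichains of the reduced state MDP depends on the number of recurrent classes of $\Phi_m=\{H_{m-1},\gamma_{m-1},Z_{m-1}\}$'' and that value/policy iteration is applied \emph{to each unichain separately} --- the paper explicitly contemplates multiple recurrent classes. Your central construction --- steering every $(\gamma,Z)$ to a fixed reference pair $(\gamma^\ast,Z^\ast)$ under the given $\pi_\gamma$ --- is exactly the step you flag as the ``delicate part,'' and it cannot be completed, because the lemma must hold for an \emph{arbitrary} given threshold policy. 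Take $\pi_\gamma(\gamma,Z)=\gamma$ (a constant threshold, which is a perfectly admissible policy in $\mathcal{P}_\gamma$ and is in fact what the baselines use): then $\gamma_{m-1}$ is invariant along every sample path, states with distinct $\gamma$ never communicate, and the environment chain has as many recurrent classes as there are threshold values. No reference state is globally accessible, so your claimed single recurrent class simply does not exist.

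The paper's proof shares your first observation --- split $\hat{\chi}=(Q,\Phi)$ and note that $\Pr\{\Phi_{m+1}|\Phi_m\}=\Pr\{H_m|H_{m-1}\}\Pr\{Z_m|\Phi_m,H_m,\gamma_m\}$ depends on neither $Q$ nor the power action --- but draws the correct (weaker) conclusion from it: the recurrent classes of $\Phi$ can be computed once, offline, independently of $\pi_P$; the queue component is a birth--death process under every power policy and hence contributes a unichain within each class; therefore the unichain decomposition of $\hat{\chi}$ is decidable uniformly over all power control policies, which is all that is needed to run value/policy iteration per class. A secondary flaw in your argument: your final step fills the buffer with a batch arrival $A\geq N$ in one slot, but the queue kernel established in the proof of Lemma~\ref{lem:state_tran} sets $p_{q,p}=0$ for $|p-q|>1$ (arrivals are at most one per slot since $\lambda\tau\ll 1$), so that transition has probability zero in the model; this part is repairable by using $N$ consecutive single-arrival slots, but the environment-reachability gap above is not.
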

\begin{proof}
Please refer to appendix \ref{app:unichain}.
\end{proof}

The number of unichains of the reduced state MDP in
(\ref{Lem:reduced}) depends on the number of recurrent classes of
local system state (excluding the queue state $Q$) in
$\hat{\chi}_m$, i.e., $\Phi_{m}=\{H_i,\gamma_i,Z_i\}_{i=m-1}$. The
value or policy iteration could be applied to different unichains
respectively, while the convergence and unique solution is ensured
\cite{Puterman:2005}. Specifically, the bellman equation
(\ref{eq:red_bellman}) could be elaborated in an offline manner as
follows:
\begin{equation}
\label{eq:pwr_pol}
\begin{array}{l}
\widetilde{V}(\hat{\chi}_m)+\theta=
\inf\limits_{\pi_\mathbf{P}(\hat{\chi}_m)}\biggl\{\widetilde{g}(\hat{\chi}_m,\pi_\mathbf{P}(\hat{\chi}_m))+\\
\sum\limits_{\Phi_{m+1}}\Pr\{\Phi_{m+1}|\Phi_{m}\}\Bigl[\tau\lambda\widetilde{V}((q_m+1)_{\bigwedge N},\Phi_{m+1})+\\
\tau\mu\widetilde{V}((q_m-1)^+,\Phi_{m+1})+\left(1-\tau\lambda-\tau\mu\right)\widetilde{V}(q_m,\Phi_{m+1})\Bigr]\biggr\}
\end{array}
\end{equation}
where $\mu=\mu(\chi_m,Z_m,\pi_P(\chi_m))$ is the mean packet service
rate in (\ref{eq:mu}), $x_{\bigwedge N}=\min\{x,N\}$, and let $P(\chi_m)=\pi_P(\chi_m)$. In the right hand side of
(\ref{eq:pwr_pol}), $P(\chi_m)$ only influence $\mu$ and
$\widetilde{g}$ in (\ref{eq:redu_g}). Specifically,
$\Pr\{\Phi_{m+1}|\Phi_{m}\}$ is presented simply as
$\Pr\{H_m|H_{m-1}\}\Pr\{Z_{m}|Z_{m-1}\}$. Hence, the optimal power
control policy for a system state $\chi_m$ is thus given by:
\begin{equation}
\label{eq:pwr_opt}
\begin{array}{l}
P(\chi_m)=\arg\min\limits_{P(\chi_m)}\biggl\{\\
\sum\limits_{H_m=\gamma_{m}}^{S_J}\Pr\{H_m|H_{m-1}\}\Bigl[\xi
P(\chi_m)+\Pr\{Z_{m}=1|Z_{m-1}\}\\
\frac{W\tau}{\overline{N_b}}\log_{2}(1+\frac{P(\chi_m)H_{m}}{N_0W})\delta(q_m,H_m,\gamma_m)\Bigr]\biggr\}\\
=\biggl(-W\tau\Pr\{Z_m=1|Z_{m-1}\}\delta(q_m,H_m,\gamma_m)/\left(\overline{N_b}\xi\ln2\right)\\
\quad -N_0W/H_m\biggr)^+
\end{array}
\end{equation}
where
$\delta(q_m,H_m,\gamma_m)=\widetilde{V}((q_m-1)^+,H_m,\gamma_m,Z_m=1)-\widetilde{V}(q_m,H_m,\gamma_m,Z_m=1)$.
Note that the optimal power control action depends on the local CSI
via the standard water-filling form. On the other hand, it also
depends on the local QSI and common feedback $Z$ through the
water-level\footnote{As a sanity check, when the CSI are i.i.d and the the control policies
are not function of QSI (i.e., ($\pi_{\gamma}(H):
\mathcal{S}\rightarrow \mathcal{S},\pi_{P}(H):
\mathcal{S}\rightarrow \mathbb{R}$)), using similar reduced state MDP technique,
the optimal power control policy is represented as:
$(-W\tau(\sum_{S_i<\gamma}\pi_i)^{K-1}/(\overline{N_b}\widetilde{\xi}\ln2)-N_0W/H_m)^+$. Where
$\widetilde{\xi}=(\widetilde{V}(q_m)-\widetilde{V}((q_m-1)^+))/\xi$
is the new Lagrange Multiplier, and considered as a constant since
the QSI influence is ignored.
Then optimal threshold $\gamma$ can be obtained. It is the same as the binary
scheduling with power control w.r.t the CSI studied in the \cite{Thrput:pwr} called {\em Variable-Rate Algorithms}. }. Using the optimal power allocation policy, the
transition probability of reduced state is
$\Pr\{\hat{\chi}_{m+1}|\hat{\chi}_{m}\}=\Pr\{Q_{m+1}|\chi_m,Z_m,\pi_P(\chi_m)\}\Pr\{\Phi_{m+1}|\Phi_{m}\}$.
The stationary distribution of $\hat{\chi}$, denoted
$\omega(\hat{\chi})$, could be found by the linear equations
$\omega(\hat{\chi}_j)=\sum_i\omega(\hat{\chi}_i)\Pr\{\hat{\chi}_{j}|\hat{\chi}_{i}\}$.
Finally, the Lagrange multiplier $\xi$ is chosen to satisfy the
average power constraint per user $P_0$:
\begin{equation}
\label{eq:avg_pwr}
P_0=\omega(\hat{\chi}_m)\sum_{H_m}\Pr\{H_m|H_{m-1}\}P(\chi_m)
\end{equation}

\subsection{Threshold Control Policy}
Threshold control policy is determined based on the common
information $\{\gamma_{m-1},Z_{m-1}\}$. The full exploitation of the
known information is critical to improve the delay performance of
the system. In fact, the common information
$\{\gamma_{m-1},Z_{m-1}\}$ could be used to exploit the memory of
all the $K$ competing users' fading channels, and predict their
transmission events at the current slot. Specifically, in the
collision channel, data will be successfully received by the AP in
the S-ALOHA network, if and only if exactly one user transmits at
one slot. Consequently, the known information shall be chosen to
ensure the user with the largest CSI will transmit alone with the
highest probability. Based on this observation, we propose a {\em
larger CSI higher priority} (LCSIHP) threshold control policy as
follows:
\begin{eqnarray}
\label{eq:gamma_symm} \gamma_m^*&=&
\pi_{\gamma}(\gamma_{m-1},Z_{m-1})\\
&=&\arg\max_{\gamma_m}\Pr\{\text{only 1 user
transmits}|\gamma_{m-1},Z_{m-1}\}\nonumber
\end{eqnarray} where $\Pr\{\text{only 1 user transmits}|\gamma_{m-1},Z_{m-1}\}$ is
given by:
\begin{equation}
\begin{array}{ll}
&\Pr\{\text{only 1 user transmits}|\gamma_{m-1},Z_{m-1}\}\\
&=\left\{
\begin{array}{ll}
K\upsilon\left(\overline{\upsilon}\right)^{(K-1)} & \textrm{if $Z_{m-1}=0$}\\
\Bigl[\zeta\overline{\upsilon}^{(K-1)}+(K-1)\overline{\zeta}\upsilon\overline{\upsilon}^{(K-2)}\Bigr] & \textrm{if $Z_{m-1}=1$}\\
\begin{array}{l}\sum_{k=2}^{K}\biggl[p_{\gamma_{m-1},2}^{(K,k)}\Bigl(k\zeta\overline{\zeta}^{k-1}\overline{\upsilon}^{(K-k)}+\\
(K-k)\overline{\zeta}^{k}\upsilon\overline{\upsilon}^{(K-k-1)}\Bigr)\biggr]\end{array}
& \textrm{if $Z_{m-1}=e$}
\end{array} \right.\end{array}
\end{equation}

where $p_{\gamma,2}^{(K,k)}$ given in (\ref{eq:pr_e}), is a function
of $\gamma$, and
$\{\upsilon,\overline{\upsilon},\zeta,\overline{\zeta}\}$ given in
(\ref{eq:tx_event}) (ignoring user index $k$) are functions of
$\{\gamma_i\}_{i=m-1}^m$. The way to obtain (\ref{eq:gamma_symm}) is
to treat the previous slot's transmitted and non-transmitted users
separately. As a sanity check, note that when the CSI are i.i.d,
$\zeta=\upsilon$ and $\overline{\zeta}=\overline{\upsilon}$ for all
$\{\gamma_{m-1},Z_{m-1}\}$, equation (\ref{eq:gamma_symm}) is
reduced to
$\gamma_m^*=\arg\max_{\gamma_m}K\upsilon\left(\overline{\upsilon}\right)^{(K-1)}$.
$\gamma_m^*$ is the same for all the slot to maximize the
probability that only one user will transmit.

\subsection{Summary of the Solution in Symmetric Network}
\label{sec:sym_summary} The overall power and threshold control
solution in symmetric network consists of an offline procedure and
an online procedure and they are summarized below.
\[
\framebox{\parbox{8.5cm}{ Offline Procedure: The output of the
offline procedure is optimal power allocation $\pi_P(\chi)$, which
will be stored in a table and used in the online procedure.
\begin{itemize}
\item{\bf Step 1) Determination of the threshold control
policy:} Figure out the threshold control policy from
(\ref{eq:gamma_symm}) for different realization of
$\{\gamma_{m-1},Z_{m-1}\}$.
\item{\bf Step 2) Acquire unichains of reduced state:}
From the given threshold control policy, obtain the recurrent
classes of the reduced state $\hat{\chi}$ from lemma
\ref{lem:unichain}.
\item{\bf Step 3) Determination of the optimal power control
policy:} For a given $\xi$, determine $\theta(\xi)$,
$\{\widetilde{V}(Q_m,H_{m-1},\gamma_{m-1},Z_{m-1};\xi)\}$ of the
bellman equation (\ref{eq:pwr_pol}) in every unichain of reduced
state by policy or value iteration algorithm. The optimal power
control policy $\pi_P(\chi_m;\xi)$ is then determined in
(\ref{eq:pwr_opt}).
\item{\bf Step 4) Transmit power constraint:} For a given $\xi$, the
average transmit power $P_0$ can be obtained in (\ref{eq:avg_pwr}).
On the other hand, we could use root-finding numerical algorithm to
determine $\xi$ that satisfies a given $P_0$.
\end{itemize}
}}
\]
\[
\framebox{\parbox{8.5cm}{ Online procedure: The homogeneous users
observe $\chi_{m}=\{Q_{m},H_{m-1},\gamma_{m-1},Z_{m-1},H_{m}\}$, the
local system state realization at the beginning of the $m$-th slot
and transmits at a power given by $\pi_P(\chi_m)$. If
$H_{m}<\pi_{\gamma}(\gamma_{m-1},Z_{m-1})$, $P_m=\pi_P(\chi_m)=0$,
i.e., the user will not transmit. }}
\]

The complexity of the online procedure is negligible because it is
simply a table looking up. The complexity of the offline procedure
depends mostly on the solution of power control policy, which
contains an iteration algorithm to solve the bellman equation in
(\ref{eq:pwr_pol}). Specifically, the complexity of the reduced
state MDP is given in following theorem.
\begin{Thm}[Complexity of the Reduced State
MDP]\label{thm:complexity} The worst case complexity of the reduced
state MDP is $O(f(K))$, where $f(K)$ is a monotonic decreasing
function of number of users $K$. Furthermore, there exists a
constant $K_0>0$ such that for all $K>K_0$, the complexity is
reduced to $O(NJ)$.
\end{Thm}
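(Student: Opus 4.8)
The plan is to express the running cost of solving the reduced state MDP of Lemma \ref{Lem:reduced} purely in terms of the number of states in its recurrent classes, and then to track how that number depends on $K$. Since value/policy iteration need only be carried out over each unichain, whose decidability is guaranteed by Lemma \ref{lem:unichain}, the quantity governing complexity is the cardinality of the recurrent set $\mathcal{R}_{\hat{\chi}}$ of reduced states $\hat{\chi}=(Q,H_{m-1},\gamma_{m-1},Z_{m-1})$. First I would factor this count. The queue $Q$ is recurrent over all $N+1$ buffer states (arrivals and departures connect them), and the exogenous ergodic CSI chain visits all $J$ states of $\mathcal{S}$, so $H_{m-1}$ contributes a factor $J$. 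The remaining factor comes from the recurrent states of the common-information process $(\gamma_{m-1},Z_{m-1})$; calling its cardinality $g(K)$, we get $|\mathcal{R}_{\hat{\chi}}|=O(NJ\,g(K))$, with the trivial bound $g(K)\le 3J$ giving the worst case $O(NJ^2)$. Thus it remains to show that $f(K)=O(NJ\,g(K))$ with $g(K)$ monotonically non-increasing and eventually $O(1)$.

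The second step is to analyze the $(\gamma,Z)$ chain induced by the LCSIHP policy (\ref{eq:gamma_symm}). Because $\gamma_m=\pi_{\gamma}(\gamma_{m-1},Z_{m-1})$ is a \emph{deterministic} map into $\mathcal{S}$ while $Z_m\in\{0,1,e\}$ depends only on the current threshold and the transmission-count statistics, the reachable thresholds are exactly the orbit of this map, and $g(K)$ is the number of distinct $(\gamma,Z)$ pairs lying on recurrent orbits. To show $f(K)$ is decreasing it suffices to show the recurrent threshold set weakly shrinks as $K$ grows. I would argue this from the structure of the objective in (\ref{eq:gamma_symm}): in each feedback case the single-transmission probability has the form $\sim K\upsilon\,\overline{\upsilon}^{K-1}$, with $\upsilon=\Pr\{H\ge\gamma\}$ modulated by the channel memory through $\{\zeta,\overline{\zeta}\}$ of (\ref{eq:tx_event}), which is maximized at $\upsilon\approx 1/K$. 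Since $\upsilon$ is decreasing in $\gamma$, the discrete maximizer $\gamma_m^*$ is weakly increasing in $K$, so the orbit of $\pi_{\gamma}$ is pushed toward the top of $\mathcal{S}$ and the recurrent threshold set can only contract.

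The final step is to exhibit $K_0$. Because $\mathcal{S}$ is finite, the smallest strictly positive transmission probability is attained at the highest threshold $\gamma=S_J$, say $\upsilon_{\min}=\Pr\{H\ge S_J\}>0$. Once $K$ is large enough that $1/K<\upsilon_{\min}$, the discrete maximizer of $K\upsilon\,\overline{\upsilon}^{K-1}$ over the admissible $\upsilon$'s is attained at $\upsilon_{\min}$ in every feedback case, i.e. $\pi_{\gamma}(\gamma_{m-1},Z_{m-1})=S_J$ for all $(\gamma_{m-1},Z_{m-1})$. The threshold orbit then collapses after a single step to the lone value $S_J$, so the recurrent $(\gamma,Z)$ pairs reduce to at most the three pairs $(S_J,0),(S_J,1),(S_J,e)$, giving $g(K)=O(1)$ and hence $f(K)=O(NJ)$. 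Taking $K_0=\lceil 1/\upsilon_{\min}\rceil$ then yields the stated bound.

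The hardest part is the monotonicity in the second step. The feedback-dependent expressions for the single-transmission probability differ across $Z_{m-1}\in\{0,1,e\}$, and the $Z_{m-1}=e$ branch in particular involves a sum over $k=2,\dots,K$ weighted by the collision-conditioned factors $p_{\gamma,2}^{(K,k)}$ of (\ref{eq:pr_e}) together with the memory terms $\{\upsilon,\overline{\upsilon},\zeta,\overline{\zeta}\}$. I would therefore need to verify that the maximizing threshold is weakly increasing in $K$ \emph{uniformly} over the three cases, rather than only for the clean memoryless $Z_{m-1}=0$ term, and to rule out the re-entry of a spuriously low threshold into a recurrent orbit as $K$ increases; this uniform unimodality argument, rather than the counting or the limiting $K_0$ computation, is the technical crux.
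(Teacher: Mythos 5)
Your overall route coincides with the paper's: count the reduced states as $O(NJ\cdot g(K))$ with worst case $O(NJ^2)$, show the recurrent threshold set contracts as $K$ grows because $\gamma_m^*$ is weakly increasing in $K$, and conclude that for sufficiently large $K$ the policy maps every $(\gamma_{m-1},Z_{m-1})$ to $S_J$, collapsing the threshold dimension and giving $O(NJ)$. The problem is that the step you explicitly defer at the end --- monotonicity of the maximizer in $K$ \emph{uniformly} over $Z_{m-1}\in\{0,1,e\}$ --- is not a loose end but the entire technical content of the paper's proof; it is stated there as Lemma~\ref{lem:Non-decreasing}. The paper closes it by introducing a structural assumption that you never identify: $\zeta\geq\upsilon$ for the same pair $(\gamma_{m-1},\gamma_{m})$ (a user above threshold last slot is at least as likely to be above threshold now as a user that was below --- justified by the slow-fading memory of the FSMC), and then handles the $Z_{m-1}=e$ branch by comparing, term by term in the sum weighted by $p_{\gamma,2}^{(K,k)}$, the objectives at $K_1$ and $K_2$. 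Your heuristic that $K\upsilon\overline{\upsilon}^{K-1}$ peaks at $\upsilon\approx 1/K$ addresses only the memoryless shape of the $Z_{m-1}=0$ branch; the $Z_{m-1}=1$ and $Z_{m-1}=e$ objectives mix $\zeta$ and $\upsilon$ in a way that does not reduce to that form, and without $\zeta\geq\upsilon$ (or an equivalent ordering) the claimed monotonicity is not established.

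There is also an imprecision in your $K_0$ step. You set $\upsilon_{\min}=\Pr\{H\geq S_J\}>0$ and take $K_0=\lceil 1/\upsilon_{\min}\rceil$, but the quantities $\upsilon,\zeta$ in (\ref{eq:tx_event}) are \emph{conditional} FSMC probabilities, e.g.\ $\Pr\{H_m\geq\gamma_m \mid H_{m-1}<\gamma_{m-1}\}$ under the modified stationary distribution. For a banded transition matrix such as the one in Table~\ref{tab:CSI}, this probability is exactly zero when $\gamma_m=S_J$ and $\gamma_{m-1}$ is low, so the claim that "the maximizer is attained at $\upsilon_{\min}$ in every feedback case" fails pointwise: from such states, choosing $S_J$ yields zero single-transmission probability and cannot be the argmax. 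The paper sidesteps this with a ratcheting argument confined to the recurrent class --- the minimal recurrent threshold $\gamma_{\min}(K)$ increases under a sufficiently larger $K$ by Lemma~\ref{lem:Non-decreasing}, and iterating pushes it up to $S_J$ --- so the collapse holds along recurrent orbits (where $\gamma_{m-1}$ is already high), not globally over all $(\gamma_{m-1},Z_{m-1})$. To repair your proof you need (i) the ordering assumption $\zeta\geq\upsilon$ to prove monotonicity in all three feedback branches, and (ii) to replace the one-shot $1/K<\upsilon_{\min}$ bound by an argument restricted to recurrent states.
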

\begin{proof}
Please refer to Appendix \ref{app:complexity}.
\end{proof}

Theorem \ref{thm:complexity} implies that when $K$ is large enough,
there is no need to exploit the memory of the fading channels. The
threshold is fixed to $S_J$ regardless of the common feedback. This
is reasonable because the more competing users we have, the smaller
the chance for single user to transmit. Hence, for sufficiently
large $K$, the users are only allowed to transmit when local CSI
reaches the largest state $S_J$, so as to reduce the intensive
collision. Note that, the complexity of the offline procedure is
substantially reduced, compared to the complexity $O(NJ^3)$ of the
brute-force solution in the original MDP in lemma \ref{lem:pwr}.

\section{Extension to Asymmetric Network}\label{sec:asymmetric}
In this section, we shall extend the delay control framework to
asymmetric S-ALOHA network, in which heterogenous users have
different fading channels. Specifically, let
$\mathcal{S}_k=\{S_{i}\}_{i=1}^{J_k}$ denote a set of $J_k$ CSI
states, $p_{i,j}^k$ denote the state transition probability and
$\pi_j^k$ denote the stationary probability for user $k$. The common
threshold for all users is not applicable for the heterogenous users
and hence, the system threshold $\gamma_m$ is extended to
$\Gamma_m=\{\gamma_{k,m}\}_{k=1}^K$, where $\gamma_{k,m}$ is the
threshold for user $k$. As a result, the threshold control policy is
extended to $\Gamma_{m}=\pi_{\Gamma}(\Gamma_{m-1},Z_{m-1})$, and
power control policy for user $k$ is denoted as
$\pi_{P_k}(\chi_{k,m})$. The set of joint control policy
$\pi=\{\pi_{\Gamma},\{\pi_{P_k}\}_{k=1}^K\}$ is easily redefined as
in section \ref{sec:model}.

\subsection{Optimal Power Control Policy under a Given Threshold Control Policy}
For a given threshold control policy, Lemma \ref{lem:state_tran}
still holds. Due to the extension of single threshold $\gamma_m$ to
system threshold $\Gamma_m$, the transition probability of local
system state of the $k$-th user should be rewritten as:
\begin{eqnarray}
\label{eq:tran_local_asymm}
&&\Pr\{\chi_{k,m+1}|\chi_{k,m},\pi_{P_k}(\chi_{k,m})\}=\\
&&\mathbb{I}\left(\Gamma_m=\pi_{\Gamma}(\Gamma_{m-1},Z_{m-1})\right)\Pr\{H_{k,m+1}|H_{k,m}\}\nonumber\\
&&\times\Pr\{Z_{m}|Z_{m-1},\{H_{k,i},\Gamma_{i}\}_{i=m-1}^m\}\nonumber\\
&&\times\Pr\{Q_{k,m+1}|\chi_{k,m},Z_{m},\pi_{P_k}(\chi_{k,m})\}\nonumber
\end{eqnarray}
where the transition probability of the feedback state $Z$ is not as
simple as the symmetric case shown in appendix \ref{app:feedback}.
For instance, the memory of channel fading of other $(K-1)$ users
should also be exploited through the known information
$\Psi_{k,m-1}=\{H_{k,m-1},\Gamma_{m-1},Z_{m-1}\}$ of user $k$.
Hence, the joint probability of CSI for other users at $m$-th slot
is given by:
\begin{eqnarray}
&&\Pr\{\mathbf{H}_{-k,m}|\Psi_{k,m-1}\}=\\
&&\sum_{\mathbf{H}_{-k,m-1}}\Pr\{\mathbf{H}_{-k,m-1}|\Psi_{k,m-1}\}\Biggl(\prod_{i\neq
k}\Pr\{H_{i,m}|H_{i,m-1}\}\Biggr)\nonumber
\end{eqnarray}
where $\mathbf{H}_{-k,m}=\{H_{i,m}\}_{i=1,i\neq k}^K$ is the set of
all users' CSI at the $m$-th slot, excluding the $k$-th one, and
$\Pr\{\mathbf{H}_{-k,m-1}|\Psi_{k,m-1}\}=\prod\limits_{i\neq
k}\Pr\{H_{i,m-1}\}/\sum\limits_{\mathbf{H}_{-k,m-1}}\prod_{i\neq
k}\Pr\{H_{i,m-1}\}$ is the belief of the possible realization of
$\mathbf{H}_{-k,m-1}$ conditioned on $\Psi_{k,m-1}$. Then, the
feedback transition probability is given by:
\begin{eqnarray}
\label{eq:feedback_asymm}
&&\Pr\{Z_{m}|Z_{m-1},\{H_{k,i},\Gamma_{i}\}_{i=m-1}^m\}=\nonumber\\
&&\sum\nolimits_{\mathbf{H}_{-k,m}}\Pr\{\mathbf{H}_{-k,m}|\Psi_{k,m-1},\Psi_{k,m}\}
\end{eqnarray}

As a result, the power control solution $\pi_{P_k}(\chi_{k,m})$ for
the $k$-th user is similar to lemma \ref{Lem:reduced} except that
transition probability of feedback state $Z_m$ is replaced by
(\ref{eq:feedback_asymm}).

\subsection{Threshold Control Policy}
The system threshold $\Gamma_{m}$ determined by the threshold
control policy will influence the successful transmission
probability of each user. Specifically, let $\alpha_{k,m}(\Gamma_m)$
represent the probability that user $k$ transmits alone at $m$-th
slot, i.e.,
$\alpha_{k,m}=\Pr\left\{H_{k,m}\geq\gamma_{k,m},\bigcup_{i\neq k}
{H_{i,m}<\gamma_{i,m}}\right\}$. Note that an increase in
$\alpha_{k,m}$ for user $k$ will result in a decrease in
$\alpha_{i,m}$ for all $i\neq k$. Hence, there is a tradeoff
relationship among the probability of successful transmission of the
$K$ users. Unlike the symmetric case, the threshold control policy
shall not only improve the delay performance, but also consider the
fairness among the $K$ heterogenous users. In S-ALOHA network
\cite{Yu:2006} and centralized system \cite{Viswanath:2002}, the
authors proposed the product-optimization form to take the fairness
into consideration. Similarly, we consider a system threshold
control policy that maximizes the product-probability:
\begin{eqnarray}
\label{eq:gamma_asymm}
\Gamma_{m}^*&=&\arg\max\nolimits_{\Gamma_{m}}\prod\nolimits_k\alpha_{k,m}
\end{eqnarray}

The product-maximization could prevent users from having very low
successful transmission probability. Similar to the symmetric case,
we shall exploit the common information $\{\Gamma_{m-1},Z_{m-1}\}$
to enhance the probability of successful transmission of $K$
competing users over a collision channel. Given all the transmission
event $\{B_{i,m-1}\}_{i=1}^K$ (defined in definition
\ref{def:transmission}) at the previous slot, the probability that
user $k$ transmits alone at current slot is given by:
\begin{eqnarray}
\label{eq:transmit_alone}
&&\alpha_{k,m}(\Gamma_{m},\Gamma_{m-1},\{B_{i,m-1}\}_{i=1}^K)=\nonumber\\
&&\Pr\{A_{k,m}|\gamma_{k,m},\gamma_{k,m-1},B_{k,m-1}\}\nonumber\\
&&\prod\nolimits_{i\neq
k}\Pr\{\overline{A}_{i,m}|\gamma_{i,m},\gamma_{i,m-1},B_{i,m-1}\}
\end{eqnarray}
Substituting into (\ref{eq:gamma_asymm}), $\Gamma_{m}^*$ could be
decoupled into single user optimization problem, i.e.,
\begin{eqnarray}
&&\gamma_{k,m}^*=\arg\max_{\gamma_{k,m}}\Pr\{A_{k,m}|\gamma_{k,m},\gamma_{k,m-1},B_{k,m-1}\}\nonumber\\
&&\left(\Pr\{\overline{A}_{k,m}|\gamma_{k,m},\gamma_{k,m-1},B_{k,m-1}\}\right)^{K-1}
\end{eqnarray}

From $\{\Gamma_{m-1},Z_{m-1}\}$, we can calculate the probability
that any specific user transmitted before and hence, the threshold
control policy can be solved by single user optimization problem
given by:
\begin{equation}
\label{eq:gamma_asymm_single}
\gamma_{k,m}^*=\\
\left\{\begin{array}{ll}
\arg\max\limits_{\gamma_{k,m}}\upsilon_k\left(1-\upsilon_k\right)^{K-1}&\textrm{if
$Z_{m-1}=0$}\\
\arg\max\limits_{\gamma_{k,m}}\begin{array}{l}\overline{\rho}_k\upsilon_k\left(1-\upsilon_k\right)^{K-1}\\
+\rho_k\zeta_k\left(1-\zeta_k\right)^{K-1} \end{array} &\textrm{if
$Z_{m-1}\neq 1$}
\end{array}\right.
\end{equation}
where $\{\upsilon_k,\zeta_k\}$ are obtained in (\ref{eq:tx_event}),
$\rho_k=\Pr\{A_{k,m-1}|\Gamma_{m-1},Z_{m-1}\}$ is the conditional
probability that user $k$ transmits at the $(m-1)$-th slot, and
$\overline{\rho}_k=(1-\rho_k)$. Hence, we have
\begin{equation}
\rho_k=\left\{\begin{array}{ll} \eta_k\prod_{i\neq
k}\overline{\eta}_i/\sum_{k}\eta_k\prod_{i\neq
k}\overline{\eta}_i&\textrm{if
$Z_{m-1}=1$}\\
\frac{\eta_k\left(1-\prod_{i\neq
k}\overline{\eta}_i\right)}{\left(1-\prod_{i}\overline{\eta}_i-\sum_j\prod_{i\neq
j}\eta_j\overline{\eta}_i\right)}&\textrm{if
$Z_{m-1}=e$}\end{array}\right.
\end{equation}
where
$\eta_k=\Pr\{A_{k,m-1}|\gamma_{k,m-1}\}=\sum_{S_j\geq\gamma_{k,m-1}}\pi_{j}^k$
is the transmission probability of user $k$, given the threshold is
$\gamma_{k,m-1}$, and $\overline{\eta}_k=1-\eta_k$.

\subsection{Summary of the Solution in Asymmetric Network}
The overall solution of the control policy in asymmetric network
also consists of an offline procedure and an online procedure.
Compared with the symmetric case, the optimal power control policy
$\pi_{P_k}(\chi)$ is not the same for all the heterogenous users. In
the offline procedure, $\pi_{P_k}(\chi)$ should be calculated and
stored in corresponding user's table for online looking up.

Similarly, the online procedure is a table looking up and hence, the
complexity is negligible. Since the threshold control policy is
decoupled to one dimensional optimization problem for single user,
the complexity of the offline procedure still depends mostly on the
iteration algorithm. Due to the extension of the system threshold,
the number of reduced state $\hat{\chi}_{k,m}$ is
$\mathcal{O}(N\prod_kJ_k)$. However, theorem \ref{thm:complexity}
still holds in the asymmetric network. For sufficiently large $K$,
the threshold control policy will increase the threshold of each
user so as to avoid intensive collision. As a result, the number of
possible $\Gamma$ states is substantially reduced and the asymptotic
complexity of user $k$ becomes $O(NJ_k)$ as in the symmetric case.

\section{Numerical Results and Discussions}\label{sec:simulation}
In this section, we shall illustrate the delay performance of the
proposed control policy via numerical simulations. We set the time
of a slot $\tau=1$ms, bandwidth $W=1$KHz. We model the packet
arrival and CSI event follows the assumption in the system model
(Section \ref{sec:model}). With different simulation scenarios, we
calculate the optimal policies in offline. In the online
application, the users simply implement the policy at each slot
corresponding to the system state observed in that slot. The packet
will stay in the buffer until it is successfully serviced, and the performance is evaluated with sufficient
realizations.

\begin{table*}[t]
\begin{center}
\caption{Two FSMC CSI Models with the Same States yet Different
Transition Probability (User1/User2)}
\begin{tabular}{|c|c|c|c|c|c|c|c|c|c|c|}
\hline & $H_1$& $H_2$& $H_3$& $H_4$& $H_5$& $H_6$& $H_7$& $H_8$& $H_9$& $H_{10}$\\
\hline States & 0.055& 0.074& 0.112& 0.153& 0.237& 0.531& 0.894& 1.343& 2.588& 4.493 \\
\hline $H_1$& 0.2/0.25& 0.8/0.75& 0& 0& 0& 0& 0& 0& 0& 0\\
\hline $H_2$& 0.2/0.25& 0.3/0.3& 0.5/0.45& 0& 0& 0& 0& 0& 0& 0\\
\hline $H_3$& 0&  0.25/0.3& 0.35/0.35& 0.4/0.35& 0& 0& 0& 0& 0& 0\\
\hline $H_4$& 0& 0& 0.3/0.34& 0.3/0.3& 0.4/0.36& 0& 0& 0& 0& 0\\
\hline $H_5$& 0& 0& 0& 0.33/0.37& 0.34/0.34& 0.33/0.29& 0& 0& 0& 0\\
\hline $H_6$& 0& 0& 0& 0& 0.33/0.37& 0.34/0.34& 0.33/0.29& 0& 0& 0\\
\hline $H_7$& 0& 0& 0& 0& 0& 0.4/0.36& 0.3/0.3& 0.3/0.34& 0& 0\\
\hline $H_8$& 0& 0& 0& 0& 0& 0& 0.4/0.35& 0.35/0.35& 0.25/0.3& 0\\
\hline $H_9$& 0& 0& 0& 0& 0& 0& 0& 0.5/0.45& 0.3/0.3& 0.2/0.25\\
\hline $H_{10}$& 0& 0& 0& 0& 0& 0& 0& 0& 0.8/0.75& 0.2/0.25\\
\hline $\pi_j^1$& 0.0137& 0.0548& 0.1097& 0.1463& 0.1755& 0.1755& 0.1463& 0.1097& 0.0548& 0.0137\\
\hline $\pi_j^2$& 0.0342& 0.1027& 0.154& 0.1586& 0.1529& 0.1201& 0.0979& 0.0951& 0.0634& 0.0211\\
\hline
\end{tabular}
\label{tab:CSI}
\end{center}
\end{table*}

\begin{figure}
\begin{center}
  {\resizebox{8cm}{!}{\includegraphics{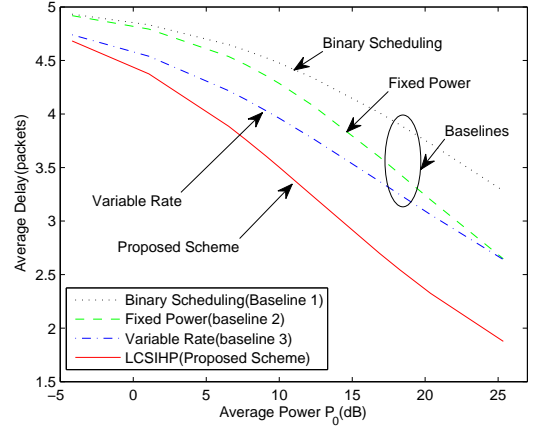}}}
  \end{center}
    \caption{Comparison of the delay performance between proposed control policy and three baselines in symmetric
    network, with $1^{\mathrm{st}}$ CSI model in Table \ref{tab:CSI} for all the homogeneous users. We assume that the buffer length
$N=5$, packet arrival rate $\lambda=1$ for all $K=5$ users, with
mean packet size $\overline{N}_b=1$K bits.}
    \label{fig:sym_delay}
\end{figure}

\begin{figure}
\begin{center}
  {\resizebox{8cm}{!}{\includegraphics{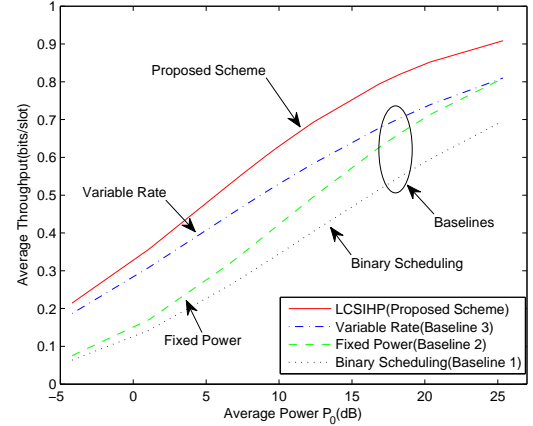}}}
  \end{center}
    \caption{Comparison of the throughput performance between proposed control policy and three baselines in symmetric
    network. The configuration is the same as Fig.\ref{fig:sym_delay}.}
    \label{fig:sym_thrput}
\end{figure}

\begin{figure}
\begin{center}
  {\resizebox{8cm}{!}{\includegraphics{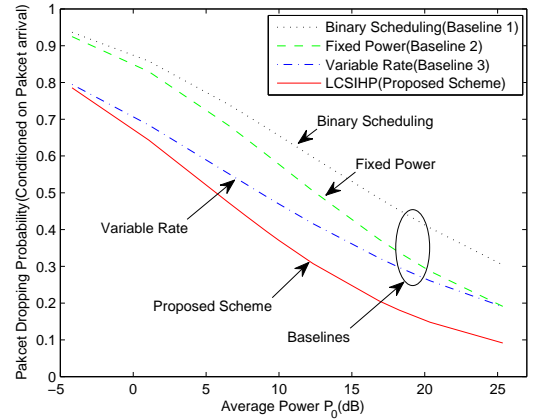}}}
  \end{center}
    \caption{Comparison of Packet Dropping Probability (Conditioned on
 Packet Arrival) between proposed control policy and three baselines
in symmetric network. The configuration is the same as
Fig.\ref{fig:sym_delay}.}
    \label{fig:sym_pr_full}
\end{figure}

Fig.\ref{fig:sym_delay}-Fig.\ref{fig:sym_pr_full} compares the
LCSIHP threshold control policy (corresponding optimal power control
policy) in symmetric network with three reference baselines.
Baseline 1 corresponds to the binary scheduling algorithm in
\cite{Yu:2006}. Baseline 2 corresponds to the LCSIHP threshold
control policy without power control. Baseline 3 corresponds to the
variable-rate algorithm with power control proposed in
\cite{Thrput:pwr}. We observe that there is a significant gain in
both delay and throughput of the proposed policy over these three
baselines. Fig.\ref{fig:sym_pr_full} compares packet dropping
probability (packet arrives when the buffer is full $Q=N$). It shows
that packet dropping performance is also improved by the proposed
policy. This scenario can also be inferred from the optimal power
control policy, which will potentially put more power on the node
with larger QSI to reduce the delay.

Fig.\ref{fig:Delay_asymm} compares the delay performance in
asymmetric network for two heterogenous users. The mean packet
arrival rate is assumed to be $\lambda=2$. Other settings are the
same as the symmetric case. We compare the performance of the
proposed scheme in section \ref{sec:asymmetric} (denoted {\em
Asymm}) with another baseline scheme designed for heterogeneous
users in \cite{Yu:2006}. Specifically, we consider power control
w.r.t CSI under the binary scheduling scheme in \cite{Yu:2006} to
form a competitive baseline (namely {\em BSP}). Observe there is
discontinuity in the delay performance of BSP, and this is because
in small SNR regime, the system threshold for user 2 is lower than
that of user 1 but they become the same in large SNR regime. Observe
that the proposed scheme has significant performance gain in terms
of fairness or delay performance compared with the BSP baseline.
\begin{figure}
\begin{center}
  {\resizebox{8cm}{!}{\includegraphics{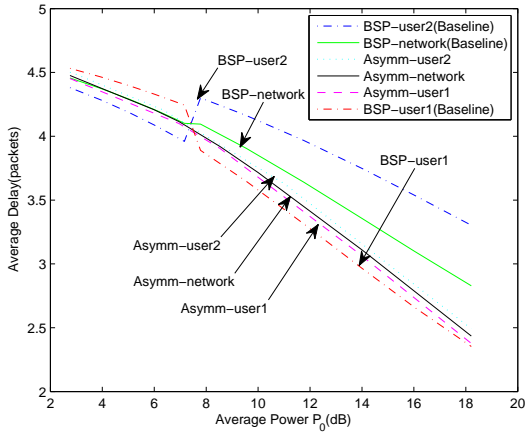}}}
  \end{center}
    \caption{Comparison of delay performance between BSP (power control w.r.t CSI
additionally) and proposed Asymm policy in asymmetric network with
two heterogenous users, and their CSI models are listed in Table
\ref{tab:CSI} (user1/user2). Specifically, BSP-user2 denotes the
delay performance for user 2 under BSP policy, while BSP-user1 is
denoted for user 1. BSP-network denotes the average delay
performance of the two users under BSP policy. Correspondingly, the
notation started with Asymm denotes the delay performance under
Asymm policy.}
    \label{fig:Delay_asymm}
\end{figure}

Fig.\ref{fig:Delay_asymm_10users} compares the delay performance in
a larger asymmetric network. There are 10 heterogeneous users which
are divided into 5 groups. In each group, there are two homogeneous
users. Furthermore, we assume a larger buffer size $N=10$, and
$\lambda=0.4$. It can be observed that in a larger network, the
fairness improvement is less obvious. This is because the threshold
is increased to avoid the intensive collision both under Asymm or
BSP policy, and the freedom of the improvement for Asymm policy is
reduced. However, the delay performance is obviously guaranteed due
to the additional dimension in QSI for the Asymm policy.

\begin{figure}
\begin{center}
  {\resizebox{8cm}{!}{\includegraphics{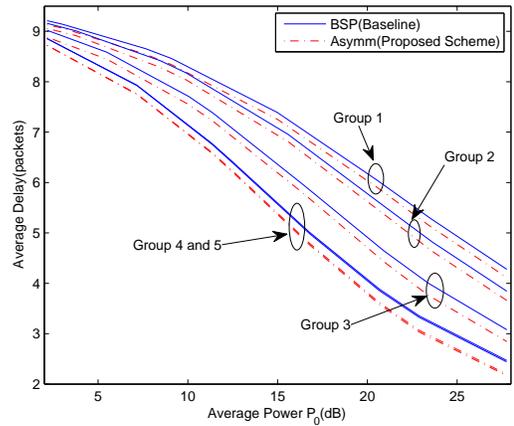}}}
  \end{center}
    \caption{Comparison of delay performance between BSP (power control w.r.t CSI
additionally) and proposed Asymm policy in asymmetric network with
10 heterogenous users. Every group has two homogeneous users.}
    \label{fig:Delay_asymm_10users}
\end{figure}

\begin{figure}
\begin{center}
  {\resizebox{8cm}{!}{\includegraphics{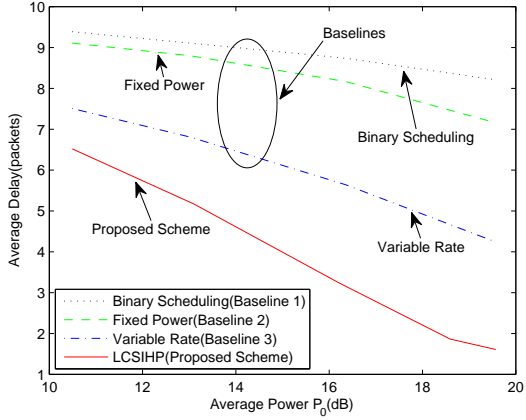}}}
  \end{center}
    \caption{Comparison of the delay performance in 10 users symmetric
    network with capture effect at the AP. Specifically, we consider narrow band transmission ($W=1$KHz).
    The data rate is given by $\widetilde{R}_k=\beta
W\log_2(1+\frac{P_kH_k}{N_0W})$, where $\beta=0.9$, buffer length
$N=10$, and packet arrival rate $\lambda=0.4$, with mean packet size
$\overline{N}_b=1$K bits. When collision occurs, the packet sent by
the $k$-th user will be successfully detected at the AP if
$\widetilde{R}_k\leq
C_k(\text{collision})=W\log_2(1+\frac{P_kH_k}{\sum_{i\neq
k}P_iH_i+N_0W})$. Otherwise, it will be corrupted. }
    \label{fig:Delay_sym_MPR}
\end{figure}

Fig.\ref{fig:Delay_sym_MPR} compares the delay performance of the
random access channel with capture effect\footnote{In our original
formulation, we have set the transmit data rate according to the
instantaneous mutual information of the channel, i.e.,
$R_k=W\log_2(1+\frac{P_kH_k}{N_0W})$ (see (\ref{eq:rate})). As a
result, the transmitted packet could be decoded only when there is
exactly one user transmits. In order to allow for possibility of
capture, we set the data rate to be $\widetilde{R}_k=\beta
W\log_2(1+\frac{P_kH_k}{N_0W})$ in the simulation, where $\beta<1$.
As a result, we leave some margin in the transmit data rate so that
when there is collision, the transmit data rate may still be smaller
than the instantaneous mutual information
$C_k(\text{collision})=W\log_2(1+\frac{P_kH_k}{\sum_{i\neq
k}P_iH_i+N_0W})$ and packet detection is possible. The criteria to
determine the success of capture is based on comparing the
$\widetilde{R}_k$ and $C_k(\text{collision})$. If
$\widetilde{R}_k\leq C_k(\text{collision})$, then the packet from
the $k$-th user can be successfully decoded. Otherwise, it will be
corrupted.}. We set $\beta = 0.9$ to leave margin for the
possibility of capture in case of collision. It can be observed that
there is significant performance gain of the proposed scheme when
there is capture.

\section{Summary}\label{sec:summary}
We considered delay-sensitive transmit power and threshold control
design in S-ALOHA network. The users adaptively adjust their
transmission threshold and power, to achieve the minimal delay of
the network. The jointly optimal policy is revealed to be
computationally intractable and hence brute force solution is simply
infeasible. However, for a given threshold control policy, we
decompose the optimal power control policy into a reduced state MDP
for single user, in which the overall complexity is
$\mathcal{O}(NJ)$. Threshold control policy is proposed by
exploiting the special structure of the collision channel and the
common feedback to derive a low complexity solution, which is a one
dimensional optimization problem in symmetric and asymmetric
networks. The delay performance of the proposed design is
illustrated to have substantial gain relative to conventional random
access approaches in both networks.

\appendices
\section{Proof of Lemma~\ref{lem:state_tran}: Transition
Probability of Local System State}\label{app:tran} Note that the
transition event is from $\chi_{k,m}$ to
$\chi_{k,m+1}=\{Q_{k,m+1},H_{k,m},\gamma_{m},Z_{m},H_{k,m+1}\}$.
Specifically, the system threshold $\gamma_m$ is given by the
threshold control policy, i.e.,
$\gamma_m=\pi_{\gamma}(\gamma_{m-1},Z_{m-1})$ with certainty, and
$\Pr\{H_{k,m+1}=S_j|H_{k,m}=S_i\}=p_{i,j}$, independent of other
states. The transition probability of feedback and queue state is
given below.

\subsection{Feedback State Transition} \label{app:feedback}
From the position of user $k$, common feedback $Z_{m-1}$ and
$\{H_{k,m-1},\gamma_{m-1}\}$ could provide the information how many
other $(K-1)$ users have transmitted at the previous slot. It can be
ultilized to improve the prediction of their transmission behavior
at current slot. Moreover, whether user $k$ transmits at current
slot will influence the realization of $Z_m$ and hence, the feedback
transition is determined only by
$\{Z_{m-1},\{H_{k,i},\gamma_{i}\}_{i=m-1}^m\}$. Next we shall find
$\Pr\{Z_{m}|Z_{m-1},\{H_{k,i},\gamma_{i}\}_{i=m-1}^m\}$ (denote
$\Pr\{Z_{m}|Z_{m-1}\}$ for simplicity) given in
(\ref{eq:tran_local}).

In fact, the common feedback information could modify the stationary
probability of CSI states. For instance, $Z_{m-1}=0$ is equal to
$\bigcup_kH_{k,m}<\gamma_{k,m-1}$. Given $H_{k,m}<\gamma$, the
stationary probability $\Pr\{H_{k,m}=S_j\}$ should be modified as
$\widetilde{\pi}_{j}^k(\gamma)=\frac{\pi_j^k}{\sum_{S_i<\gamma}\pi_i^k}$.
Similarly, Given $H_{k,m}\geq\gamma$, the stationary probability
$\Pr\{H_{k,m}=S_j\}$ should be modified as
$\widehat{\pi}_{j}^k(\gamma)=\frac{\pi_j^k}{\sum_{S_i\geq\gamma}\pi_i^k}$.
Specifically, we introduce following definition for user $k$, where
$\gamma_{k,m}$ is the threshold for $k$-th user, utilized in section
\ref{sec:asymmetric}.

\begin{Def}[Transmission Event of the $k$-th User]
\label{def:transmission} Let $A_{k,m}$ denote the event that user
$k$ attempts to transmit at the $m$-th slot, i.e., $H_{k,m}\geq
\gamma_{k,m}$, while $\overline{A}_{k,m}$ denote the complimentary
event, i.e., $H_{k,m}<\gamma_{k,m}$. Furthermore, let
$B_{k,m}\in\{A_{k,m},\overline{A}_{k,m}\}$.
\end{Def}

As a result, the probability of the transmission event is given by:
\begin{equation}
\label{eq:tx_event}
\begin{array}{ll}\Pr\{A_{k,m}|\gamma_{k,m},\gamma_{k,m-1},B_{k,m-1}\}=\\
\left\{
\begin{array}{lll}
\upsilon_k=\sum\limits_{S_i<\gamma_{k,m-1}}\sum\limits_{S_j\geq\gamma_{k,m}}\widetilde{\pi}_{i}^k(\gamma_{k,m-1})p_{i,j}^k\text{if $B_{k,m-1}=\overline{A}_{k,m-1}$}\\
\zeta_k=\sum\limits_{S_i\geq\gamma_{k,m-1}}\sum\limits_{S_j\geq\gamma_{k,m}}\widehat{\pi}_{i}^k(\gamma_{k,m-1})p_{i,j}^k\text{if
$B_{k,m-1}=A_{k,m-1}$}
\end{array}\right.\end{array}
\end{equation}
For simplicity, let $\overline{\upsilon}_k=1-\upsilon_k$, and
$\overline{\zeta}_k=1-\zeta_k$. Note that, in symmetric network,
$\bigcup_{k}\upsilon_k=\upsilon$ and $\bigcup_{k}\zeta_k=\zeta$.
Therefore, we ignore the user index $k$ in the symmetric network.

\begin{itemize}
\item{\bf Feedback transits from $Z_{m-1}=0$ :}
All the other $(K-1)$ users did not transmit at the previous slot, and
transition probability is given by:
\begin{equation}
\begin{array}{ll}\Pr\{Z_m|Z_{m-1}=0\}=\\
\left\{ \begin{array}{ll}
\begin{array}{l}\overline{\upsilon}^{K-1}\mathbb{I}(\overline{A}_{k,m})\end{array}&\textrm{if $Z_{m}=0$}\\
\begin{array}{l}\overline{\upsilon}^{K-1}\mathbb{I}(A_{k,m})+\\
(K-1)\upsilon\overline{\upsilon}^{K-2}\mathbb{I}(\overline{A}_{k,m})\end{array}&\textrm{if $Z_{m}=1$}\\
\begin{array}{l}\left(1-\overline{\upsilon}^{K-1}\right)\mathbb{I}(A_{k,m})+\\
\left(1-\overline{\upsilon}^{K-1}-(K-1)\upsilon\overline{\upsilon}^{K-2}\right)\mathbb{I}(\overline{A}_{k,m})\end{array}
&\textrm{if $Z_{m}=e$}
\end{array} \right. \end{array}
\end{equation}

\item{\bf Feedback transits from $Z_{m-1}=1$:} Only one user's CSI exceeded $\gamma_{m-1}$ at
the previous slot, which could be divided into two cases.

If $H_{k,m-1}\geq\gamma_{k,m-1}$ ($A_{k,m-1}$ happens), all the
other users did not transmit at the previous slot. The CSI
information of other users are the same as $Z_{m-1}=0$ case, so the
transition probability is
$\Pr\{Z_{m}|Z_{m-1}=1,A_{k,m-1}\}=\Pr\{Z_{m}|Z_{m-1}=0\}$.

If $H_{k,m-1}<\gamma_{m-1}$ ($\overline{A}_{k,m-1}$ happens), only
one of other users transmitted at the previous slot. Then the
transition probability is given by:
\begin{equation}
\begin{array}{l}\Pr\{Z_m|Z_{m-1}=1,\overline{A}_{k,m-1}\}=\\
\left\{ \begin{array}{ll}
\begin{array}{l}\overline{\zeta}\overline{\upsilon}^{K-2}\mathbb{I}(\overline{A}_{k,m})\end{array}&\textrm{if $Z_{m}=0$}\\
\begin{array}{l}\overline{\zeta}\overline{\upsilon}^{K-2}\mathbb{I}(A_{k,m})+\\
\left(\zeta\overline{\upsilon}^{K-2}+\overline{\zeta}(K-2)\upsilon\overline{\upsilon}^{K-3}\right)\mathbb{I}(\overline{A}_{k,m})\end{array}&\textrm{if $Z_{m}=1$}\\
\left\{\left(1-\overline{\zeta}\overline{\upsilon}^{K-2}\right)\mathbb{I}(A_{k,m})+\right.&\textrm{if $Z_{m}=e$}\\
\begin{array}{l}\left(1-\overline{\zeta}\overline{\upsilon}^{K-2}-\zeta\overline{\upsilon}^{K-2}-\right.\\
\left.\left.\overline{\zeta}(K-2)\upsilon\overline{\upsilon}^{K-3}\right)\mathbb{I}(\overline{A}_{k,m})\right\}\end{array}
\end{array} \right.\end{array}
\end{equation}

\item{\bf Feedback transits from $Z_{m-1}=e$:}
At least two users transmitted at the previous slot, which should
also be divided into two cases. We first find the probability of
exact users involved in the transmission. Specifically, given
threshold $\gamma$, the probability that $k$ out $K$ users will
transmit is $\widetilde{p}_{\gamma}^{(K,k)}={K \choose
k}\left(\sum_{S_j\geq\gamma}\pi_{j}\right)^k\left(\sum_{S_j<\gamma}\pi_{j}\right)^{(K-k)}$.
Given additional information that at least $n$ users will transmit,
the probability is improved as
\begin{equation}
\label{eq:pr_e}
p_{\gamma,n}^{(K,k)}=\widetilde{p}_{\gamma}^{(K,k)}/\sum\nolimits_{i=0}^{n-1}\left(1-\widetilde{p}_{\gamma}^{(K,i)}\right),\forall
k\geq n
\end{equation}

If $H_{k,m-1}\geq\gamma_{m-1}$ ($A_{k,m-1}$ happens), at least one
of other users transmitted, i.e.,
\begin{equation}
\begin{array}{ll}\Pr\{Z_m|Z_{m-1}=e,A_{k,m-1}\}=\\
\left\{ \begin{array}{ll}
\sum\limits_{k=1}^{K-1}p_{\gamma,1}^{(K-1,k)}\left(\overline{\zeta}^{k}\overline{\upsilon}^{K-1-k}\right)\mathbb{I}(\overline{A}_{k,m})&\textrm{if $Z_{m}=0$}\\
\sum\limits_{k=1}^{K-1}p_{\gamma,1}^{(K-1,k)}\Bigl\{\left(\overline{\zeta}^{k}\overline{\upsilon}^{K-1-k}\right)\mathbb{I}(A_{k,m})+&\textrm{if $Z_{m}=1$}\\
\begin{array}{l}\bigl(k\zeta\overline{\zeta}^{k-1}\overline{\upsilon}^{K-1-k}\\
+\overline{\zeta}^{k}(K-1-k)\upsilon\overline{\upsilon}^{K-k-2}\bigr)\mathbb{I}(\overline{A}_{k,m})\Bigr\}\end{array}\\
\begin{array}{l}\sum\limits_{k=1}^{K-1}p_{\gamma,1}^{(K-1,k)}\Bigl\{\\
\left(1-\overline{\zeta}^{k}\overline{\upsilon}^{K-k-1}\right)\mathbb{I}(A_{k,m})\\
+\bigl(1-\overline{\zeta}^{k}\overline{\upsilon}^{K-k-1}-k\zeta\overline{\zeta}^{k-1}\overline{\upsilon}^{K-1-k}\\
-\overline{\zeta}^{k}(K-1-k)\upsilon\overline{\upsilon}^{K-k-2}\bigr)\mathbb{I}(\overline{A}_{k,m})\Bigr\}
\end{array}  &\textrm{if
$Z_{m}=e$} \end{array}\right. \end{array}
\end{equation}

If $H_{k,m-1}<\gamma_{m-1}$ ($\overline{A}_{k,m-1}$ happens), at
least two of other users transmitted, i.e.,
\begin{equation}
\begin{array}{ll}\Pr\{Z_m|Z_{m-1}=e,\overline{A}_{k,m-1}\}=\\
\left\{ \begin{array}{ll}
\sum\limits_{k=2}^{K-1}p_{\gamma,2}^{(K-1,k)}\left(\overline{\zeta}^{k}\overline{\upsilon}^{K-1-k}\right)\mathbb{I}(\overline{A}_{k,m})&\textrm{if $Z_{m}=0$}\\
\begin{array}{l}\sum\limits_{k=2}^{K-1}p_{\gamma,2}^{(K-1,k)}\Bigl\{\overline{\zeta}^{k}\overline{\upsilon}^{K-1-k}\mathbb{I}(A_{k,m})+\\
\bigl(k\zeta\overline{\zeta}^{k-1}\overline{\upsilon}^{K-1-k}+\\
\overline{\zeta}^{k}(K-1-k)\upsilon\overline{\upsilon}^{K-k-2}\bigr)\mathbb{I}(\overline{A}_{k,m})\Bigr\}\end{array}&\textrm{if $Z_{m}=1$}\\
\begin{array}{l}\sum\limits_{k=2}^{K-1}p_{\gamma,2}^{(K-1,k)}\Bigl\{\\
\left(1-\overline{\zeta}^{k}\overline{\upsilon}^{K-k-1}\right)\mathbb{I}(A_{k,m})\\
+\Bigl(1-\overline{\zeta}^{k}\overline{\upsilon}^{K-k-1}-k\zeta\overline{\zeta}^{k-1}\overline{\upsilon}^{K-1-k}\\
-\overline{\zeta}^{k}(K-1-k)\upsilon\overline{\upsilon}^{K-k-2}\Bigr)\mathbb{I}(\overline{A}_{k,m})\Bigr\}
\end{array} &\textrm{if $Z_{m}=e$} \end{array} \right. \end{array}
\end{equation}
\end{itemize}

\subsection{Queue State Transition} The queue state transition is correlated with the
feedback. For instance, if $Z_{m}\neq 1$, the probability of
decreased queue state should be zero, because of no successful data
receival. To obtain simple solution, we consider the case the same
as \cite{Vince:delay}, where the time slot duration $\tau$ is
substantially smaller than the average packet inter-arrival time and
average packet service time $\frac{1}{\mu}$ ($\tau \ll
\frac{1}{\lambda}$ and $\tau \ll \frac{1}{\mu}$), where $\mu$ is the
average packet service rate defined later.

\begin{itemize}
\item{\bf Packet arrival:} Since packet arrival follows Poisson
distribution with mean arrival rate $\lambda$, the transition
probability of the queue state related to packet arrival is given
by:
\begin{equation}
p_{q,q+1}=\Pr\{Q_{k,m+1}=q+1|Q_{k,m}=q\}=\lambda\tau
\end{equation}
\item{\bf Packet departure:}
The packet length follows exponential distribution with mean packet
size $\overline{N}_b$, so the packet service time also follows
exponential distribution. Conditioned on the state
$(\chi_{k,m},Z_m)$ and data rate given in (\ref{eq:rate}), the mean
packet service rate is:
\begin{eqnarray}
\label{eq:mu}
&&\mu(\chi_{k,m},Z_m,\pi_{P}(\chi_{k,m}))\\
&&=\frac{W}{\overline{N}_b}\log_{2}(1+\frac{P_{m}H_{k,m}}{N_0W})\mathbb{I}(Z_{m}=1)\nonumber
\end{eqnarray}
where $P_{k,m}=\pi_{P}(\chi_{k,m})$ is the power transmitted at
current slot determined by power control policy. Furthermore,
$Z_m\neq 1$ will lead to zero service rate. Another case leads to
zero service rate is $H_{k,m}<\gamma_{k,m}$, in which the power
control policy will set $P_{k,m}=0$. Hence, the probability for
packet departure is given by:
\begin{equation}
\begin{array}{l}
p_{q,q-1}=\\
\Pr\{Q_{k,m+1}=(q-1)^+|Q_{k,m}=q,\chi_{k,m},Z_{m},\pi_{P}(\chi_{k,m})\}\\
=\mu(Q_{k,m}=q,\chi_{k,m},Z_{m},\pi_{P}(\chi_{k,m}))\tau
\end{array}
\end{equation}
\item{\bf No change in the $k$-th user:} The transition probability
corresponding to no change in queue state is given by:
\begin{equation}
\begin{array}{l}
p_{q,q}=\\
\Pr\{Q_{k,m+1}=q|Q_{k,m}=q,\chi_{k,m},Z_{m},\pi_{P}(\chi_{k,m})\}\\
=(1
-p_{q,q-1}- p_{q,q+1})
\end{array}
\end{equation}
\end{itemize}

Since $\lambda\tau\ll 1$ and $\mu\tau\ll 1$, the probability of
multiple packet arrivals or packet departures is negligible and
hence $p_{q,p}=0$ for $|p-q|>1$. Thus the transition probability of
queue state is given by
$\Pr\{Q_{k,m+1}|\chi_{k,m},Z_{m},\pi_{P}(\chi_{k,m})\}$, which
completes the proof.

\section{Proof of Lemma \ref{lem:unichain}: Decidability of the Unichain of Reduced
State} \label{app:unichain} Denote the state (excluding $Q$) in
$\hat{\chi}$ as $\Phi=(H,\gamma,Z)$, whose transition probability
has been given in lemma \ref{lem:state_tran}, independent of $Q$ and
power control policy. Specifically,
$\Pr\{\Phi_{m+1}|\Phi_{m}\}=\Pr\{H_m|H_{m-1}\}\Pr\{Z_m|\Phi_{m},H_{m},\gamma_{m}\}$,
where $\gamma_m$ is determined from the given threshold control
policy. Then, the recurrent classes of $\Phi$ could be found.
Furthermore, the queue state evolves as a birth-death process under
every power control policy, forming an unichain itself. As a result,
the unichain of the reduced state $\hat{\chi}=(Q,\Phi)$ is
decidable.

\section{Proof of Theorem~\ref{thm:complexity}: Complexity of the Reduced State
MDP}\label{app:complexity} $\{\zeta_k,\upsilon_k\}$ in
$(\ref{eq:tx_event})$ are functions of
$\{\gamma_{m-1},\gamma_{m}\}$. Specifically, we assume
$\zeta_k\geq\upsilon_k$ for the same $\{\gamma_{m-1},\gamma_{m}\}$.
This is a practical assumption for fading channels, because the CSI
states will not change fast\cite{FSMC:Channel}. Then, we have
following lemma about the threshold control policy $\gamma_m^*$ in
(\ref{eq:gamma_symm}).
\begin{Lem}[Monotonic Increasing Function of $\gamma_m^*$ w.r.t $K$]
\label{lem:Non-decreasing} Given $\{\gamma_{m-1},Z_{m-1}\}$, if
$K_2\geq K_1$,
$\gamma_m^*(\gamma_{m-1},Z_{m-1},K_2)\geq\gamma_m^*(\gamma_{m-1},Z_{m-1},K_1)$.
Specifically, if
$\gamma_m^*(\gamma_{m-1},Z_{m-1},K_1)<\gamma_{m-1}$, then for a
sufficiently large $K_2$,
$\gamma_m^*(\gamma_{m-1},Z_{m-1},K_2)>\gamma_m^*(\gamma_{m-1},Z_{m-1},K_1)$.
\end{Lem}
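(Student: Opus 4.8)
The plan is to treat $P_{\mathrm{one}}(\gamma_m;K):=\Pr\{\text{only 1 user transmits}\mid\gamma_{m-1},Z_{m-1}\}$ as a function of the decision variable $\gamma_m$ and the parameter $K$, and to establish a monotone comparative statics result: the maximizer $\gamma_m^*$ of (\ref{eq:gamma_symm}) is non-decreasing in $K$. The first observation I would record is that, for fixed $\gamma_{m-1}$, both $\upsilon=\upsilon(\gamma_m)$ and $\zeta=\zeta(\gamma_m)$ in (\ref{eq:tx_event}) are strictly decreasing in $\gamma_m$ (under the standing FSMC positivity, raising $\gamma_m$ deletes strictly positive terms from the sums over $S_j\ge\gamma_m$). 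Hence increasing $\gamma_m$ is equivalent to decreasing the attempt probabilities, and it suffices to show that the optimal attempt probability is non-increasing in $K$. Writing $a$ for the (known, or averaged) number of users that transmitted in slot $m-1$ and $b=K-a$ for the rest, all three branches of the objective collapse to the single heterogeneous-Bernoulli form $G(a,b)=\overline{\zeta}^{\,a}\overline{\upsilon}^{\,b}\big(a\,\tfrac{\zeta}{\overline{\zeta}}+b\,\tfrac{\upsilon}{\overline{\upsilon}}\big)$, with $a=0$ for $Z_{m-1}=0$, $a=1$ for $Z_{m-1}=1$, and a mixture over $a=k\ge2$ weighted by $p_{\gamma_{m-1},2}^{(K,k)}$ of (\ref{eq:pr_e}) for $Z_{m-1}=e$.

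For the comparative statics I would use the single-crossing property (Milgrom--Shannon monotone comparative statics): for two thresholds $\gamma'<\gamma''$ (so $\upsilon''<\upsilon'$ and $\overline{\upsilon}''>\overline{\upsilon}'$) it is enough that the ratio $R(K)=G(\gamma'';K)/G(\gamma';K)$ be non-decreasing in $K$, since this guarantees that once the higher threshold is weakly preferred it stays preferred, pushing $\gamma_m^*$ up. The branch $Z_{m-1}=0$ is immediate: there $R(K)=\tfrac{\upsilon''}{\upsilon'}\big(\tfrac{\overline{\upsilon}''}{\overline{\upsilon}'}\big)^{K-1}$, which is strictly increasing in $K$ because $\overline{\upsilon}''/\overline{\upsilon}'>1$; equivalently the continuous optimizer is $\upsilon^*=1/K$, manifestly decreasing. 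For $Z_{m-1}=1,e$ I would factor out the dominant power $\overline{\upsilon}^{\,b}$ and differentiate $\log R$ in a continuous $K$; the leading term $\log(\overline{\upsilon}''/\overline{\upsilon}')>0$ is then corrected by terms of order $1/K$ coming from the linear factor $a\tfrac{\zeta}{\overline{\zeta}}+b\tfrac{\upsilon}{\overline{\upsilon}}$, and the standing assumption $\zeta\ge\upsilon$ is precisely what prevents the $\zeta$-type (previously transmitting) users from reversing the ordering carried by the $\overline{\upsilon}^{\,b}$ factor.

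I expect the main obstacle to be making the monotonicity of $R(K)$ exact for all $K_2\ge K_1$ in the branches $Z_{m-1}=1$ and especially $Z_{m-1}=e$: the $1/K$ correction terms can be negative at small $K$, and for $Z_{m-1}=e$ the mixing weights $p_{\gamma_{m-1},2}^{(K,k)}$ themselves shift with $K$ (the number $a=k$ of previously transmitting users concentrates near $K\sum_{S_j\ge\gamma_{m-1}}\pi_j$), so both the per-term structure and the averaging distribution move simultaneously. The robust and decisive step is therefore the asymptotic one, which I would prove directly by exponential-versus-polynomial dominance: for fixed $\gamma_m$ the objective decays like $\big(\overline{\zeta}^{\,\eta}\overline{\upsilon}^{\,1-\eta}\big)^{K}$ with $\eta=\sum_{S_j\ge\gamma_{m-1}}\pi_j$, up to factors polynomial in $K$, and this base is strictly increasing in $\gamma_m$; hence for all sufficiently large $K$ the maximizer is pinned at the largest state, i.e.\ $\gamma_m^*\to S_J$. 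Combining this with the hypothesis $\gamma_m^*(\gamma_{m-1},Z_{m-1},K_1)<\gamma_{m-1}\le S_J$, which guarantees room above the current optimum, yields the ``specifically'' clause: for $K_2$ large enough $\gamma_m^*(K_2)$ reaches at least $\gamma_{m-1}$ and hence strictly exceeds $\gamma_m^*(K_1)$, which is exactly the strict increase feeding Theorem \ref{thm:complexity}.
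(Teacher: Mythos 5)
Your framework (monotone comparative statics on $P_{\mathrm{one}}(\gamma_m;K)$, the unified form $G(a,b)=\overline{\zeta}^{\,a}\overline{\upsilon}^{\,b}\bigl(a\zeta/\overline{\zeta}+b\upsilon/\overline{\upsilon}\bigr)$, and the ratio test) is set up correctly, and two pieces of it are solid: the $Z_{m-1}=0$ branch is proved completely, and the exponential-dominance argument for the ``specifically'' clause is sound and in fact more explicit than the paper's treatment of that clause. But the first claim of Lemma \ref{lem:Non-decreasing} --- $\gamma_m^*(\gamma_{m-1},Z_{m-1},K_2)\geq\gamma_m^*(\gamma_{m-1},Z_{m-1},K_1)$ for \emph{every} $K_2\geq K_1$, in \emph{all three} feedback branches --- is precisely what Theorem \ref{thm:complexity} needs to conclude that $f(K)$ is monotonically decreasing, and your proposal leaves it unproven for $Z_{m-1}=1$ and $Z_{m-1}=e$. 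You name the obstacle yourself: the ratio $R(K)$ carries an affine-in-$K$ correction factor that can move the wrong way at small $K$, and for $Z_{m-1}=e$ the mixture weights $p_{\gamma_{m-1},2}^{(K,k)}$ of (\ref{eq:pr_e}) shift with $K$ simultaneously. Declaring the asymptotic step ``robust and decisive'' does not repair this: single-crossing is not preserved under mixtures in general, and the asymptotic argument only delivers the second clause, not the uniform monotonicity. So as written the proposal proves one branch of claim one and all of claim two, which is a genuine gap rather than a stylistic difference.

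The paper closes exactly this gap by a term-wise comparison rather than a ratio argument: for $Z_{m-1}\neq e$ it appeals directly to the monotone decrease of $\zeta,\upsilon$ (equivalently the increase of $\overline{\zeta},\overline{\upsilon}$) in $\gamma_m$ from (\ref{eq:tx_event}), and for $Z_{m-1}=e$ it compares, when $K_1$ grows to $K_2$, each summand of (\ref{eq:gamma_symm}) with the same index $k$ together with the additional terms $k=K_1+1,\dots,K_2$, invoking the standing assumption $\zeta\geq\upsilon$ (for the same $\{\gamma_{m-1},\gamma_m\}$) to rule out the previously-transmitting users reversing the ordering. In your language, what is missing is a per-$k$ single-crossing statement: that each $G(k,K-k)$, together with the reweighting by $p_{\gamma_{m-1},2}^{(K,k)}$, shifts its preference weakly toward larger $\gamma_m$ as $K$ increases, with $\zeta\geq\upsilon$ used where you currently have only a heuristic $1/K$-expansion. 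You use $\zeta\geq\upsilon$ only inside that informal expansion; the paper uses it as the load-bearing inequality in the exact comparison. Supplying that per-term comparison (or an argument that single-crossing survives this particular mixture) is what would turn your proposal into a complete proof of both claims.
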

\begin{proof}
Given $\{\gamma_{m-1},Z_{m-1}\}$, $\gamma_m$ just influence the
$\{\zeta,\upsilon,\overline{\zeta},\overline{\upsilon}\}$ parameter
in (\ref{eq:gamma_symm}), and from (\ref{eq:tx_event}),
$\{\zeta,\upsilon\}$ are monotonic decreasing
($\{\overline{\zeta},\overline{\upsilon}\}$ are monotonic
increasing) functions of $\gamma_m$. As a result, lemma
\ref{lem:Non-decreasing} is obvious when $Z_{m-1}\neq e$. If
$Z_{m-1}=e$, when $K_1$ is increased to $K_2$, by comparing each
term of the same $k$ case and in additional $k=(K_1+1)\cdots K_2$
case, using the assumption of $\zeta\geq\upsilon$ for the same
$\{\gamma_{m-1},\gamma_{m}\}$, the monotonic increasing
characteristic is also obvious.
\end{proof}

As the reduced state is $\hat{\chi}=\{Q,H,\gamma,Z\}$, the worst
case complexity is corresponding to the total number of states of
$\hat{\chi}$, i.e., $O(NJ^2)$. On the other hand, since the QSI and
CSI states are recurrent, the least number of states in a recurrent
class is $\mathcal{O}(NJ)$. Next we will show that the number of
states of the system threshold $\gamma$ decreases as $K$ increases,
and $\gamma=S_J$ regardless of the feedback when $K$ is large
enough, which completes the proof.

Given $K_1$, let $\gamma_{\min}(K_1)$ be the minimal threshold in a
recurrent reduced state class. Specifically,
$\gamma_{\min}(K_1)=\gamma_m^*(\gamma_{K_1},Z_{K_1},K_1)$, where
$\gamma_{K_1}>\gamma_{\min}(K_1)$. By lemma
\ref{lem:Non-decreasing}, for a sufficiently large $K_2>K_1$,
$\gamma_{\min}(K_2)>\gamma_{\min}(K_1)$ and hence, the minimal
threshold in the recurrent class is increased. Following the
argument, the minimal threshold will increase to the largest CSI
state $S_J$, when $K$ is increased to a large number $K_0$.

\bibliographystyle{IEEEtran}
\bibliography{IEEEabrv,Huang}

\begin{thebibliography}{10}
\providecommand{\url}[1]{#1}
\csname url@rmstyle\endcsname
\providecommand{\newblock}{\relax}
\providecommand{\bibinfo}[2]{#2}
\providecommand\BIBentrySTDinterwordspacing{\spaceskip=0pt\relax}
\providecommand\BIBentryALTinterwordstretchfactor{4}
\providecommand\BIBentryALTinterwordspacing{\spaceskip=\fontdimen2\font plus
\BIBentryALTinterwordstretchfactor\fontdimen3\font minus
  \fontdimen4\font\relax}
\providecommand\BIBforeignlanguage[2]{{%
\expandafter\ifx\csname l@#1\endcsname\relax
\typeout{** WARNING: IEEEtran.bst: No hyphenation pattern has been}%
\typeout{** loaded for the language `#1'. Using the pattern for}%
\typeout{** the default language instead.}%
\else
\language=\csname l@#1\endcsname
\fi
#2}}

\bibitem{Tsybakov:1979}
B.~Tsybakov and W.~Mikhailov, ``{Ergodicity of slotted {ALOHA} systems},''
  \emph{Probl. Inf. Transm.}, vol.~15, pp. 30--312, Oct./Dec. 1979.

\bibitem{Stability:1999}
W.~Luo and A.~Ephremides, ``Stability of {N} interacting queues in random
  access system,'' \emph{{IEEE} Trans. Inf. Theory}, vol.~45, pp. 1579--1587,
  July 1999.

\bibitem{Dominant:System}
R.~Rao and A.~Ephremides, ``On the stability of interacting queues in a
  multi-access system,'' \emph{{IEEE} Trans. Inf. Theory}, vol.~34, pp.
  918--930, Sept. 1988.

\bibitem{CSI_RA:2005}
S.~Adireddy and L.~Tong, ``Exploiting decentralized channel state information
  for random access,'' \emph{{IEEE} Trans. Inf. Theory}, vol.~51, pp. 537--561,
  Feb. 2005.

\bibitem{Thrput:pwr}
X.~Qin and R.~A. Berry, ``Distributed approaches for exploiting multiuser
  diversity in wireless networks,'' \emph{{IEEE} Trans. Inf. Theory}, vol.~52,
  pp. 392--413, Feb. 2006.

\bibitem{Yu:2006}
Y.~Yu and G.~B. Giannakis, ``Opportunistic medium access for wireless
  networking adapted to decentralized {CSI},'' \emph{{IEEE} Trans. Wireless
  Commun.}, vol.~5, pp. 1445--1455, June 2006.

\bibitem{Delay:2005}
V.~Naware, G.~Mergen, and L.~Tong, ``Stability and delay of finite-user slotted
  {ALOHA} with mutipacket reception,'' \emph{{IEEE} Trans. Inf. Theory},
  vol.~51, pp. 2636--200, July 2005.

\bibitem{TUA:2007}
S.~B. Rasool and A.~U.~H. Sheikh, ``An approximate analysis of buffered
  {S-ALOHA} in fading channels using tagged user analysis,'' \emph{{IEEE}
  Trans. Wireless Commun.}, vol.~6, pp. 1320--1326, Apr. 2007.

\bibitem{Delay:trade-off}
T.~P. Coleman and M.~M¨¦dard, ``{Trade-off between energy and delay in wireless
  packetized systms},'' in \emph{Proc. 39th Annual Allerton Conf.
  Communications, Control and Computing}, 2001.

\bibitem{Yeh:01PhD}
E.~M. Yeh, ``{Multiaccess and fading in communication networks},'' Ph.D.
  dissertation, MIT, {September} {2001}.

\bibitem{DEC-MDP:2002}
D.~S. Bernstein, R.~Givan, N.~Immerman, and S.~Zilberstein, ``The complexity of
  decentralized control of markov decision processes,'' \emph{Mathematics of
  Operations Research}, vol.~4, pp. 819--840, 2002.

\bibitem{Littman:1994}
M.~L. Littman, ``Memoryless policies: theoretical limitations and practical
  results,'' in \emph{Proc. of the Third International Conf. on Simulation of
  Adaptive Behavior}, 1994.

\bibitem{Meuleau:1999}
N.~Meuleau, K.~E. Kim, L.~P. Kaelbling, and A.~R. Cassandra, ``Solving {POMDP}s
  by searching the space of finite policies,'' in \emph{Proc. of the Fifteenth
  Conf. on Uncertainty in AI}, 1999, pp. 417--426.

\bibitem{FSMC:Channel}
H.~S. Wang and N.~Moayeri, ``Finite-state markov channel-a useful model for
  radio communication channels,'' \emph{{IEEE} Trans. Veh. Technol.}, vol.~44,
  pp. 163--17, Feb. 1995.

\bibitem{LDPC:2001}
S.Y.Chung, J.~G.D.Forney, T.J.Richardson, and R.Urbanke, ``On the design of
  low-density parity-check codes within $0.0045$db of the shannon limit,''
  \emph{IEEE Commun. Letters}, vol.~5, pp. 58--60, Feb. 2001.

\bibitem{Bertsekas:1992}
D.~P. Bertsekas and R.~Gallager, \emph{{Data Networks }}.\hskip 1em plus 0.5em
  minus 0.4em\relax Prentice Hall, 1992.

\bibitem{Puterman:2005}
M.~L. Puterman, \emph{{Markov Decision Processes : Discrete Stochastic Dynamic
  Programming}}.\hskip 1em plus 0.5em minus 0.4em\relax John Wiley and Sons,
  1994.

\bibitem{Bertsekas_V1:1995}
D.~P. Bertsekas, \emph{{Dynamic programming and optimal control}}.\hskip 1em
  plus 0.5em minus 0.4em\relax Nashua, NH: Athena Scientific, 1995, vol.2.

\bibitem{DEC-MDP:PI}
D.~S. Bernstein, C.~Amato, E.~A. Hansen, and S.~Zilberstein, ``Policy iteration
  for decentralized control of markov decision processes,'' \emph{Journal of
  Artificial Intelligence Research}, vol.~34, pp. 89--132, Feb. 2009.

\bibitem{undecidability:cornerstone}
G.~Rozenberg and A.~Salomaa, \emph{{Cornerstones of Undecidability}}.\hskip 1em
  plus 0.5em minus 0.4em\relax Prentice Hall, 1994.

\bibitem{Madani:POMDP}
O.~Madani, S.~Hanks, and A.~Condon, ``On the undecidability of probabilistic
  planning and inifinite-horizon partially observable {M}arkov decision process
  problems,'' in \emph{Proc. of the Sixteenth National Conf. on Artificial
  Intelligence}, 1999, pp. 541--548.

\bibitem{Vince:delay}
V.~K.~N. Lau and Y.~Chen, ``{Delay-optimal power and precoder adaptation for
  multi-stream MIMO systems in wireless fading channels with outdated CSIT},''
  \emph{To be appeared in {IEEE} Trans. Wireless Commun.}

\bibitem{Vince:reduced}
V.~K.~N. Lau and Y.~Cui, ``Delay-optimal power and subcarrier allocation for
  {OFDMA} systems via stochastic approximation,'' \emph{Submitted to {IEEE}
  Trans. Wireless Commun.}

\bibitem{Viswanath:2002}
P.~Viswanath, D.~N.~C. Tse, and R.~Laroia, ``Opportunistic beamforming using
  dumb antennas,'' \emph{{IEEE} Trans. Inf. Theory}, vol.~48, pp. 1277--1294,
  June 2002.

\end{thebibliography}

\begin{IEEEbiography}[{\includegraphics[width=1in,height=1.25in,clip,keepaspectratio]{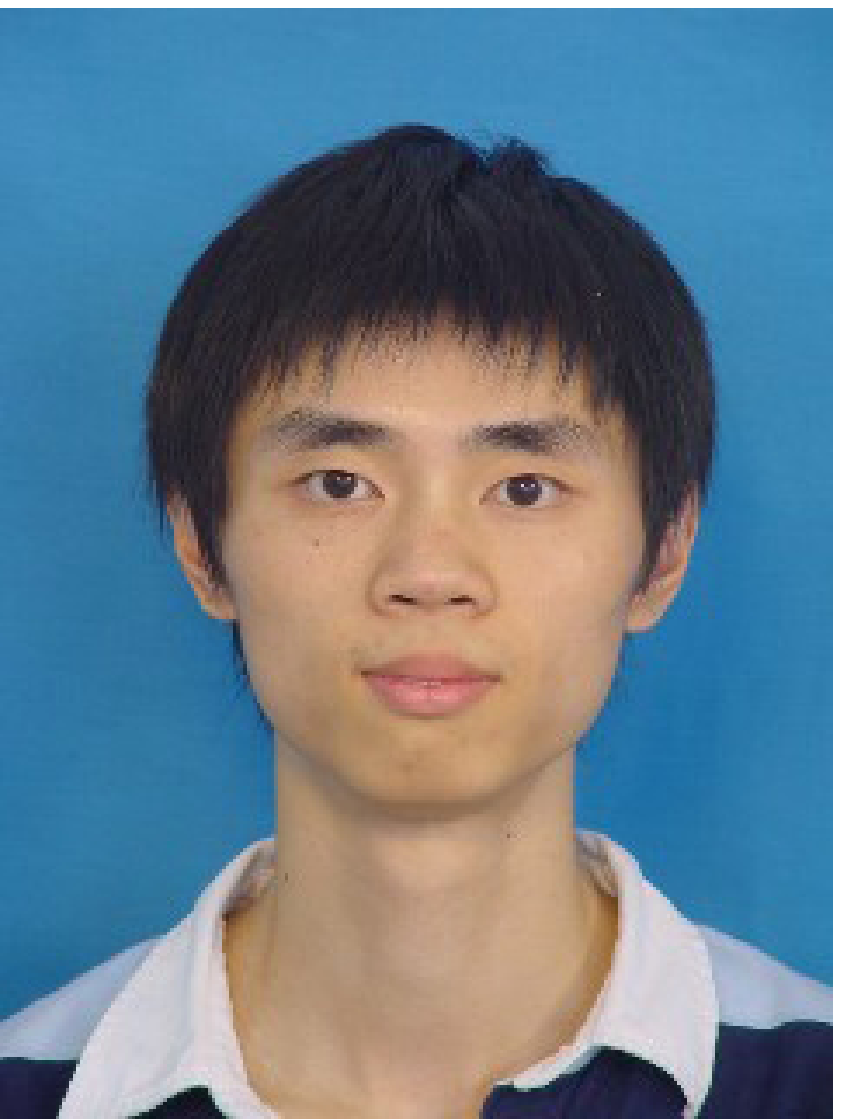}}]{Huang Huang}
received the B.Eng. and M.Eng. (Gold medal) from the Harbin
Institute of Technology(HIT) in 2005 and 2007 respectively, all in
Electrical Engineering. He is currently a PhD student at the
Department of Electrical and Computer Engineering, The Hong Kong
University of Science and Technology. His recent research interests
include cross layer design and performance analysis via game theory
in random access network, and embedded system design.
\end{IEEEbiography}
\begin{IEEEbiography}[{\includegraphics[width=1in,height=1.25in,clip,keepaspectratio]{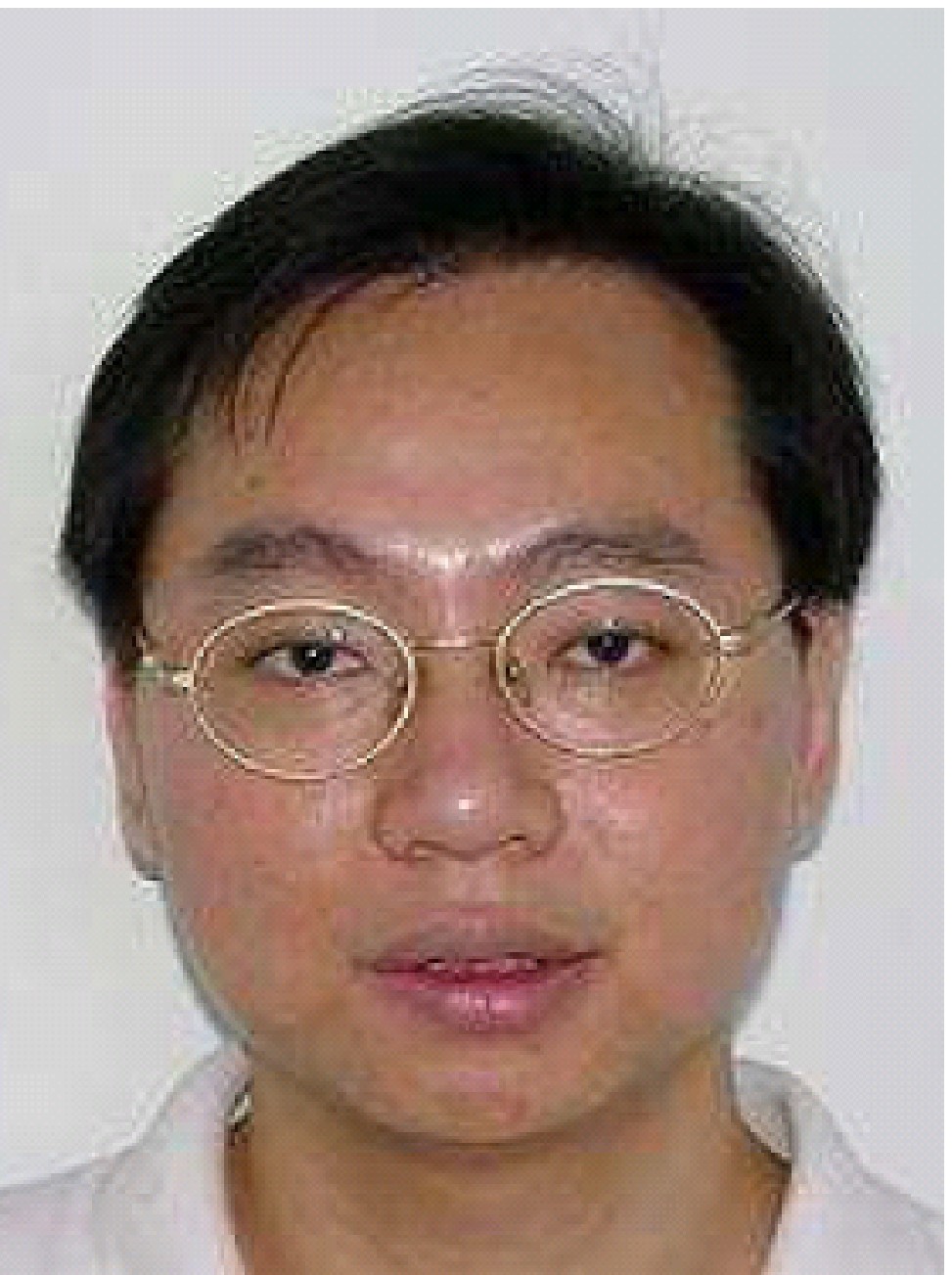}}]{Vincent Lau}
obtained B.Eng (Distinction 1st Hons) from the University of Hong
Kong (1989-1992) and Ph.D. from Cambridge University (1995-1997). He
was with HK Telecom (PCCW) as system engineer from 1992-1995 and
Bell Labs - Lucent Technologies as member of technical staff from
1997-2003. He then joined the Department of ECE, Hong Kong
University of Science and Technology (HKUST) as Associate Professor.
His current research interests include the robust and
delay-sensitive cross-layer scheduling of MIMO/OFDM wireless systems
with imperfect channel state information, cooperative and cognitive
communications, dynamic spectrum access as well as stochastic
approximation and Markov Decision Process.
\end{IEEEbiography}

\end{document}